\title{Catalysing Completeness and Universality}
\author{Aleks Kissinger
\institute{University of Oxford}
\email{aleks.kissinger@ox.ac.uk} \and
Neil J. Ross
\institute{Dalhousie University}
\email{neil.jr.ross@dal.ca} \and
John van de Wetering
\institute{University of Amsterdam}
\email{john@vdwetering.name}
}
\date{\today}
\begin{document}
\maketitle

\begin{abstract}
    A catalysis state is a quantum state that is used to make some desired operation possible or more efficient, while not being consumed in the process. Recent years have seen catalysis used in state-of-the-art protocols for implementing magic state distillation or small angle phase rotations. In this paper we will see that we can also use catalysis to prove that certain gate sets are computationally universal, and to extend completeness results of graphical languages to larger fragments. In particular, we give a simple proof of the computational universality of the CS+Hadamard gate set using the catalysis of a $T$ gate using a CS gate, which sidesteps the more complicated analytic arguments of the original proof by Kitaev. This then also gives us a simple self-contained proof of the computational universality of Toffoli+Hadamard. Additionally, we show that the phase-free ZH-calculus can be extended to a larger complete fragment, just by using a single catalysis rule (and one scalar rule).
\end{abstract}

In chemistry, a \emph{catalyst} is a substance that facilitates a
reaction without being modified by it. In analogy, a quantum state is
said to be a \emph{catalyst} or a \emph{catalysis state}, if it can be
used to make a desired operation more efficient, or even possible,
while not being consumed in the process.

The idea of catalysis has recently found several applications in
quantum computation. Catalytic methods were used to reason about resource
conversions in fault-tolerant quantum computing, and to derive lower
bounds on the cost of certain important computational
tasks~\cite{beverland2020}. Catalytic methods were also used to
improve the cost of unitary approximations over restricted gate sets
such as the Clifford+$T$ gate set~\cite{amy2023a}, and to establish
number-theoretic characterizations for important extensions of the
Clifford gate set~\cite{amy2023b}. A construction based on catalysis
is also currently the leading candidate for the fault-tolerant
implementation of small-angle
rotations~\cite{Gidney2018halvingcostof}, and one of the most promising 
magic state distillation protocols~\cite{Gidney2019efficientmagicstate}.

In this paper we extend the uses of catalysis in quantum computing in
two new directions: establishing the computational universality of
gate sets and proving the completeness of graphical calculi.

In the first direction we provide a novel proof of the computational
universality of the CS+Hadamard gate set by leveraging the fact that
any Clifford+$T$ circuit can be reduced to a CS+Hadamard circuit using
CS gates to catalyse $T$ gates. This side-steps the complicated
analytical arguments of the original proof~\cite{kitaev1997quantum},
and also establishes that CS+Hadamard circuits only incurs a linear
overhead in the number of samples needed compared to Clifford+$T$
circuits. In addition, this also leads to a simple and self-contained
proof of the computational universality of Toffoli+Hadamard circuits.

In the second direction we show that the phase-free
ZH-calculus~\cite{zhcompleteness2020,zhphasefree} can be extended to a
complete calculus for the Clifford+$T$ fragment by simply including a
rule encoding a catalytic equation, along with a scalar cancellation
rule. The proof of completeness then follows simply, and this approach
works for any tower of quadratic ring extensions
\[
\mathbb{D} \subseteq \mathbb{D}\left[a_1\right] \subseteq \ldots \subseteq \mathbb{D}\left[a_1, \ldots, a_k\right]
\]
where $\mathbb{D}=\mathbb{Z}[\frac12 ]$ and $a_j^2\in
\mathbb{D}[a_1,\ldots, a_{j-1}]$. In contrast to previous complete
calculi for the Clifford+$T$
fragment~\cite{SimonCompleteness,vilmartzxtriangle}, this yields a
calculus with rules that are easy to interpret, and relies on a
generic proof strategy, rather than one specific to Clifford+$T$.

More generally, our results demonstrate that catalytic methods provide
powerful means to extend results between different gate sets and graphical calculi.

\section{The ZH-calculus and catalysis}

The ZH-calculus~\cite{backens2018zhcalculus,zhcompleteness2020} is a graphical language designed to reason more easily about quantum computing involving controlled unitaries than the earlier ZX-calculus~\cite{CD1,CD2}. ZH-diagrams are string diagrams built out of generators representing certain linear maps between qubits that can be composed together either horizontally, corresponding to regular composition of linear maps, and vertically, corresponding to tensor product. The two generators are \emph{Z-spiders} and \emph{H-boxes}.
These are represented by circles and squares respectively, and correspond to the following linear maps:
\begin{equation}\label{eq:Z-spider-def}
 \tikzfig{z-spider} \ \ \ := \ \ \ket{0\cdots 0}\bra{0\cdots 0} + \ket{1\cdots 1}\bra{1\cdots 1} 
\end{equation}
\begin{equation}\label{eq:H-spider-def}
 \tikzfig{H-spider} \ \ \ := \ \ \sum a^{i_1\ldots i_m j_1\ldots j_n} \ket{j_1\ldots j_n}\bra{i_1\ldots i_m}
\end{equation}
Here the \emph{label} of the H-box $a$ can be any complex number. Both Z-spiders and H-boxes can have any number of inputs or outputs (including zero). If they have $n$ inputs and $m$ outputs, then they correspond to matrices of size $2^n\times 2^m$. The Z-spider matrix consists of all zeroes, except for the top-left and bottom-right corner where there is a 1. The matrix of the H-box is all ones, except for the bottom-right corner where there is an $a$.
If the label of the H-box is $-1$, then we usually don't write it. In the special case of 1 input and 1 output the $-1$ labelled H-box is proportional to the Hadamard.
Next to these generators we also have the standard structural generators of compact-closed string-diagrammatic language: identity, swap, cup and cap~\cite{vandewetering2020zxcalculus}.

We say a graphical calculus is \emph{universal} for a set of matrices when it can represent any matrix in this set using some diagram. When we allow the label of H-boxes to be arbitrary complex numbers, ZH-diagrams are universal for all complex-valued matrices of size $2^n\times 2^m$~\cite{backens2018zhcalculus}. If instead we restrict the labels to some sub-ring $R$ of $\mathbb{C}$ including at least $\mathbb{Z}[\frac12]$, then it is universal for matrices over $R$~\cite{zhcompleteness2020}. In the \emph{phase-free} ZH-calculus we only allow the default label of $-1$ for the H-boxes, and we augment the calculus with a generator representing the scalar $\frac12$ written as a star: \tikzfig{star}. The phase-free ZH-calculus is universal for matrices over the ring $\mathbb{Z}[\frac12]$~\cite{zhcompleteness2020}.

Let's give some examples of how some useful unitaries are represented as ZH-diagrams. First, We define an \emph{X-spider} and one with a \emph{NOT} applied to it as a derived generator:
\begin{equation}\label{eq:defx}
(X) \quad\ \  \tikzfig{X-spider-dfn-free}\qquad\qquad\qquad (NOT)\quad\ \  \tikzfig{negate-dfn-free}
\end{equation}
The X-spider allows us to calculate the XORs of computational basis states. Hence, in particular we can use it to represent a CX (i.e.~the CNOT) gate:
\begin{equation}
    \text{CNOT} \ = \ \tikzfig{CNOT-circ} \ = \tikzfig{CNOT-ZX}
\end{equation}
Here we are allowed to write a horizontal wire, because all the (derived) generators of the ZH-calculus are fully symmetric tensors, and hence whether a wire is an input or output does not change the linear map it represents.

Other useful gates are the CZ, CCZ and Toffoli gate:
\begin{equation}
    \text{CZ} \ = \ \tikzfig{CZ} \qquad \text{CCZ} \ = \ \tikzfig{CCZ} \qquad \text{Tof} \ = \ \tikzfig{toffoli-hbox}
\end{equation}
More general controlled-phase gates, can also be represented as ZH-diagrams. In particular, the CZ$(\alpha) := \text{diag}(1,1,1,e^{i\alpha})$ gate is represented as follows:
\begin{equation}
    \text{CZ}(\alpha) \ = \ \tikzfig{CZ-alpha}
\end{equation}

The ZH-\emph{calculus} is called that because we can actually do calculations with the diagrams. We can treat ZH-diagrams as undirected graphs, because Z-spiders and H-boxes are fully symmetric tensors, and hence the only relevant information in the diagram is which generator is connected to which other. Besides these topological symmetries, we also have a set of \emph{rewrite rules}. We present here the rules for the phase-free ZH-calculus; see Figure~\ref{fig:phasefree-rules}.

% \begin{figure}%[!t]
% \centering
% \tikzfig{ZH-rules}
% \caption{
%     The basic rules of the ZH-calculus.
% }
% \label{fig:zh-rules}
% \end{figure}

\begin{figure}[!tb]
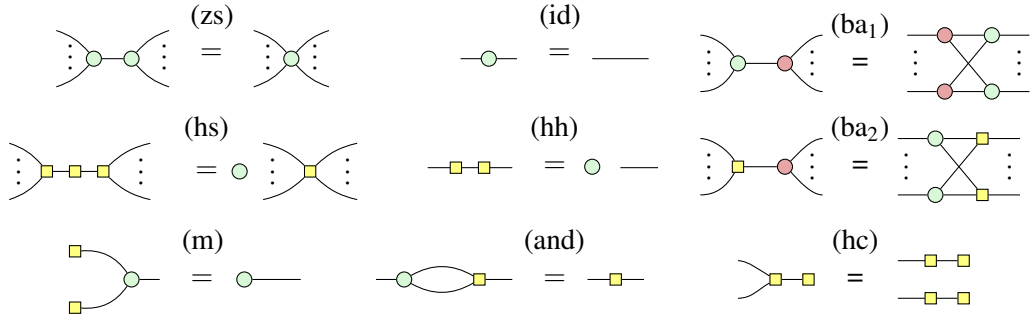

% \def\lineSkip{5em}
 % \centering
 % \scalebox{1.0}{
 % \begin{tabular}{|cccc|}
 % \hline \rule{0pt}{\lineSkip} 
 %  \quad (zs) & \tikzfig{Z-spider-rule} & \quad (id) & \tikzfig{Z-special} \quad  \\[\lineSkip]
 %  \quad (hs) & \tikzfig{H-spider-rule} & \quad (hh) & \tikzfig{H-identity} \quad  \\[\lineSkip]
 %  \quad (ba$_1$)& \tikzfig{ZX-bialgebra} & \quad (ba$_2$) & \tikzfig{ZH-bialgebra}  \quad \\[\lineSkip]
 %  \quad (m) & \tikzfig{multiply-rule} & \quad (o) & \tikzfig{ortho-rule}  \quad \\[\lineSkip]
 % \hline
 % \end{tabular}}
 \ctikzfig{ZH-rules} 
 \caption{The rules of the phase-free ZH-calculus.
 The right-hand sides of both \textit{bialgebra} rules \StrongCompRule and \HCompRule are complete bipartite graphs, with an additional input or output for each vertex. \SpiderRule and \HFuseRule stand respectively for \emph{Z-spider} and \emph{H-spider}; 
 \IdRule for \emph{identity}; 
 \MultRule for \emph{multiply}; 
 \AndRule for the identity involving the AND stating AND$(x,x) = x$; 
 and \HCopyRule for H-copy.
 }
 \label{fig:phasefree-rules}
\end{figure}
  
We say a set of rewrite rules is \emph{complete} for a ring $R$ when any two diagrams representing equal linear maps with matrix entries in the ring $R$ can be rewritten into each other using just these rewrite rules. 
The rule set of Figure~\ref{fig:phasefree-rules} is complete for the ring $\Z[\frac12]$.

Note that we have the following relations between the spiders of the ZX-calculus~\cite{CD1,CD2} and H-boxes:
\begin{equation}\label{eq:Z-a-state}
    \tikzfig{Z-a-state} \ = \ \tikzfig{H-a-state} \qquad \tikzfig{H-1-state} \ = \ \tikzfig{ket-+-ZX}
\end{equation}
The following \emph{multiply rule} from the universal ZH-calculus (where H-boxes are labelled by arbitrary complex numbers) will also be useful:
\begin{equation}\label{eq:multiply-rule-phased}
    \tikzfig{multiply-rule-phased}
\end{equation}
It is this equation that gives the multiply rule \MultRule it's name (take $a=b=-1$).

We can use the ZH-calculus to show the correctness of some simple catalysis equations.
\begin{example}
  \label{ex:cstot}
  We can use a single copy
  of a CS gate and a single $\ket{T}$ state to apply a $T$ gate and
  get the starting $\ket{T}$ state back:
  \begin{equation}\label{eq:cat-CS-T}
    \tikzfig{cat-CS-T}
  \end{equation}
  Here, we used the standard decomposition of a controlled-phase gate
  CZ$(\alpha)$ into CNOT gates and $Z(\pm \alpha/2)$ phase gates.
\end{example}
  
Just using Clifford operations and CS gates, it is not possible to
construct a $T$ gate. This is because the number $e^{i\frac\pi4}$
appearing in the $T$ gate is not part of the ring of entries generated
by the matrices of the Clifford and CS gates. However, with
Eq.~\eqref{eq:cat-CS-T} we see that as soon as we have just \emph{one}
$\ket{T}$ state available to us, we can use $CS$ gates to apply
\emph{any number} of $T$ gates. Indeed, because the catalysis state
$\ket{T}$ is not modified by the application of the circuit in
Eq.~\eqref{eq:cat-CS-T}, it can be weaved through an arbitrary
Clifford+$T$ circuit in order to replace every $T$ gate by a small
Clifford+CS circuit. This catalytic use of $\ket{T}$ states can be
contrasted to the process of $T$ gate injection. In the latter
process, $\ket{T}$ states are used to perform $T$ gates but can no
longer be accessed after the injection has taken place: they have been
consumed by the injection.

\begin{example}
  \label{ex:cztot}
We can do something similar with a CCZ gate: we can transform the magic state $\ket{\text{CCZ}} := \text{CCZ}\ket{+++}$ into 3 $\ket{T}$ states using Clifford operations and one $T$ gate:
\begin{equation}
    \tikzfig{cat-CCZ-T}
\end{equation}
Here we use some rewrites involving phase gadgets and pushing phases through H-boxes~\cite{vandewetering2020zxcalculus}.
\end{example}

So again, if we can perform CCZ and Clifford gates and have just a single $\ket{T}$ available, then we can inject as many $T$ gates as we want.

These examples can be captured using the notion of \emph{catalytic embeddings}, which provide a framework for
reasoning about certain aspects of catalysis in quantum circuits~\cite{amy2023a}.

\begin{definition}
\label{def:catemb}
Let $\mathcal{U}$ and $\mathcal{V}$ be two collection of unitaries. An
\emph{$m$-dimensional catalytic embedding} from $\mathcal{U}$ to
$\mathcal{V}$ is a pair $(\phi, \ket{c})$ of a function $f:\mathcal{U}
\to \mathcal{V}$ and a quantum state $\ket{c}\in\C^m$ such that
\[
\phi(U) \ket{\psi}\ket{c} = (U\ket{\psi})\ket{c}
\]
for any unitary $U\in\mathcal{U}$ and any quantum state
$\ket{\psi}$. We call the state $\ket{c}$ the \emph{catalyst} of
$(\phi,\ket{c})$.
\end{definition}

Definition~\ref{def:catemb} shows that the unitary $\phi(U)$ can be
used to apply the unitary $U$ to the state $\ket{\psi}$, provided that
the state $\ket{c}$ is available.  Intuitively, one should think of
the elements of $\mathcal{U}$ as unitaries that are ``hard'' to
implement and of the elements of $\mathcal{V}$ as unitaries that are
``easy'' to implement. With this in mind, the existence of a catalytic
embedding $(\phi,\ket{c}): \mathcal{U} \to \mathcal{V}$ implies that
the hard unitaries can be performed using only the easy ones in the
presence of the appropriate catalyst. 
% As we show below, catalysis has
% applications other than resource savings: we can use catalysis to
% prove that some gate sets are already \emph{computationally
% universal}, even if they are not (obviously) approximately universal
% for unitary synthesis, and we can use catalysis to conveniently extend
% complete graphical calculi to larger domains.

As a nice demonstration of the utility of ZH in catalysis, we present in Appendix~\ref{app:adder} a new proof of one of the most useful results using catalysis: how the synthesis of small-angle phase rotations can be implemented in a fault-tolerant friendly manner, by decomposing it into a series of dyadic angle rotations $Z(\frac{2\pi}{2^n})$, each of which can be implemented using catalysis and an adder gadget~\cite{Gidney2018halvingcostof}. While the end result is not new, our proof of correctness is more elementary, relying only on low level reasoning about catalysis, and does not use special high-level properties concerning phase gradients and Fourier transforms.

\section{Catalysing universality}

There are several ways in which a gate set can be \emph{universal}. It is common to say that a gate set $G$ is \emph{approximately universal} if any unitary can be approximated up to arbitrary accuracy using circuits over $G$. Because the fundamental purpose of a quantum computer is to estimate the expectation value of some observable $\mathcal{O}$, being able to approximately compute such expectation values yields another, weaker, notion of universality: \emph{computational universality}. 

We start with some state $\ket{\psi}$, apply some unitary $U$ to it, perform some measurements, and finally post-process these measurements to get an estimate of $\mathcal{O}$. After many such runs we will get a close approximation of $\mathcal{O}$. Mathematically we can represent this as
\begin{equation}\label{eq:expectation-value}
    \langle \mathcal{O}\rangle \ = \ 
% \tr(\mathcal{O} U\ketbra{\psi}{\psi} U^\dagger).
    \tikzfig{expectation-value}
\end{equation}
However, when we are trying to estimate this observable, we don't have to do this with just a single quantum circuit we run over and over again. Instead we can have a collection of different quantum circuits $V_j$ (potentially acting on a different number of qubits), input states $\ket{\psi_j}$, and observables $\mathcal{O}_j$, such that taking a particular weighted average gets us the outcome we are after:

% Now let's suppose we can transform the quantum computation $U\ketbra{\psi}{\psi}U^\dagger$ into a different expression
% $$ U\ketbra{\psi}{\psi}U^\dagger \ = \ \sum_j \lambda_j V_j \ketbra{\psi_j}{\psi_j} V_j^\dagger$$
% where here the $V_j$ are unitaries and $\ket{\psi_j}$ are quantum states, both potentially acting on a different number of qubits than $U$, and the $\lambda_j$ are complex numbers.
% Then we can express the expectation value of Eq.~\eqref{eq:expectation-value} as a sum of some other expectation values:
\begin{equation}\label{eq:expectation-value-sum}
    \langle \mathcal{O}\rangle \ = \ \sum_j \lambda_j \ \tikzfig{expectation-value-j}
    % \tr( \mathcal{O}_j V_j \ketbra{\psi_j}{\psi_j} V_j^\dagger) 
    \ = \ \sum_j \lambda_j \langle \mathcal{O}\rangle_j,
\end{equation}
where here we define $\langle \mathcal{O}\rangle_j$ to be the expectation value of $\mathcal{O}_j$ with respect to $V_j$ and $\ket{\psi_j}$.
We then see that if we can estimate each of the $\langle \mathcal{O}\rangle_j$, then we can also estimate $\langle \mathcal{O}\rangle$ itself, by just summing our estimates like $\langle \mathcal{O}\rangle = \sum_j \lambda_j \langle \mathcal{O}\rangle_j$.

While we can reduce the calculation of an expectation value to the calculation of a sum of (potentially simpler to calculate) expectation values in this way, there is the important issue of the overhead in the number of samples needed. Generally, we want to determine an error budget for how close we want the estimate to be, and then that determines how many times we need to sample from the quantum computation. Since we are summing together different expectation values, we need to be careful that we aren't blowing up the error in the estimates. Suppose for instance that some $\lambda_k = 100$. Then a small error in our estimate of $\langle \mathcal{O}\rangle_j$ will blow up by a factor of a 100. On the other hand, if $\lambda_k = 1/100$, then any error will also be decreased by a factor of a 100, so that even a large error is not that important. The most efficient strategy is then to sample $\langle \mathcal{O}\rangle_j$ a number of times proportional to $\lvert \lambda_j\rvert$. The total overhead using this sum-of-expectations approach is then $\sum_j \lvert \lambda_j\rvert$ when comparing it to estimating $\langle \mathcal{O}\rangle$ directly with the original circuit .

These sum-of-expectations techniques are used in a variety of subfields in quantum computing. For instance, quasi-probability simulators use such a technique to write a quantum computation as an affine combination of easier to classically simulate quantum computations~\cite{bravyiImprovedClassicalSimulation2016,veitch2012negative}. They are also used in stochastic compilation in order to better suppress errors, for instance by randomly multiplying a state by a Pauli in order to get rid of systematic errors~\cite{wallman2016noise}, or by manipulating the order of Trotter terms when decomposing a Hamiltonian simulation~\cite{Cambell2019RandomCompiler}.
Here we will see that such a technique can also be used to argue for the computational universality of a gate set, by reducing from a more extensive gate set using catalysis.

\subsection{Clifford+CS}

Here we will show that we can reduce the calculation of an expectation value involving a Clifford+T circuit to one involving a collection of Clifford+CS circuits with some small overhead. This will prove that the Clifford+CS gate set is computationally universal, as the Clifford+T gate set is as well.

Suppose we have a Clifford+T circuit $C$ applied to the input state $\ket{\psi}$. Then we can transform $C$ into a circuit $C'$ containing just Clifford gates and CS gates using catalysis, so that $C'\ket{\psi}\ket{T} = C\ket{\psi} \ket{T}$. 
%Note that $C'$ as a circuit will also contain measurements and classically controlled corrections, but for the purposes of the calculations here we only care about the unitary it implements, so we can ignore those. 
If we were trying to estimate the observable $\mathcal{O}$ we can then check that:
\begin{equation}\label{eq:expectation-value-T-cat}
    \tikzfig{expectation-value-T-cat}
    % \tr((\mathcal{O}\otimes I) U'\ketbra{0\cdots 0 T}{0\cdots 0 T} (U')^\dagger) \ = \ \tr(\mathcal{O} U\ketbra{0\cdots 0}{0\cdots 0} U^\dagger).
\end{equation}
So instead of running the circuit $C$, we can run the circuit $C'$, which doesn't contain any $T$ gates. This is then an example of Eq.~\eqref{eq:expectation-value-sum} where the sum is over just one term and we have $\lambda_1:=1$, $U_1 := C'$, $\ket{\psi_1} := \ket{\psi}\otimes \ket{T}$ and $\mathcal{O}_1 := \mathcal{O}\otimes I$. 

But to prepare $\ket{T}$ we still need to use a $T$ gate, so we need to also get rid of this magic state. We can decompose this magic state into a sum of Clifford states. Because each term in the sum needs to retain the form of an expectation value like~\eqref{eq:expectation-value}, we can't just decompose $\ket{T}$ into pure states $\ket{\phi_j}$, instead we need to decompose $\ketbra{T}{T}$ into a sum of density matrices $\ketbra{\phi_j}{\phi_j}$. One way to do this is the following:
\begin{align}\label{eq:ketbra-T-decomp}
    \ketbra{T}{T} \ &= \ \frac12\begin{pmatrix}1 & e^{i\pi/4} \\ e^{-i\pi/4} & 1\end{pmatrix} \ = \ \frac12\begin{pmatrix}1 & \frac{1+i}{\sqrt{2}} \\ \frac{1-i}{\sqrt{2}} & 1\end{pmatrix} \ \nonumber\\
    &= \ \frac{1}{\sqrt{2}} \left( \ketbra{+}{+} + \ketbra{-i}{-i}\right) - \frac{\sqrt{2}-1}{2}\left(\ketbra{0}{0} + \ketbra{1}{1}\right).
\end{align}
Hence, we can decompose $\ketbra{T}{T}$ into four Clifford states $\ket{\phi_1} = \ket{+}$, $\ket{\phi_2} = \ket{-i}$, $\ket{\phi_3} = \ket{0}$ and $\ket{\phi_4} = \ket{1}$ with weights $\lambda_1 = \lambda_2 = \frac{1}{\sqrt{2}}$ and $\lambda_3=\lambda_4 = - \frac{\sqrt{2}-1}{2}$. 
Starting with the left-hand side of Eq.~\eqref{eq:expectation-value-T-cat} we then have:
\begin{equation}
    \scalebox{0.9}{\tikzfig{expectation-value-T-decompose}}
\end{equation}

We see then that this is a case of Eq.~\eqref{eq:expectation-value-sum} with $\ket{\psi_j} := \ket{\psi}\otimes \ket{\phi_j}$ and $\mathcal{O}_j := \mathcal{O}\otimes I$ and $U_j = C'$ for all $j\in\{1,2,3,4\}$.
Furthermore, we can check that the four terms have $\sum_j \lvert \lambda_j \rvert = 2\sqrt{2} -1 \approx 1.83$.
Hence, if we decompose the magic state in this way we need to collect $1.83$ samples more than we would have needed to if we did use the magic $\ket{T}$ state directly. 

Summarising the full procedure we see then that we can do the following:
\begin{enumerate}
    \item Start with the Clifford+T computation you want to calculate.
    \item Replace all $T$ gates by a CS gate catalysis circuit using a $\ket{T}$.
    \item Replace the $\ket{T}$ state needed for all the catalysis by the Clifford states $\ket{\psi_j}$.
    \item Run each of the resulting four circuits a number of times proportional to $\lvert \lambda_j\rvert$.
    \item Combine the resulting estimates of the observable by scaling by $\lambda_j$ to get the final outcome.
\end{enumerate}
When we have Clifford gates and CS gates, the gate set is generated by \CNOT, Hadamard, $S$ and CS. The \CNOT can be constructed using CS and Hadamard, and if we allow states to be prepared in the $\ket{0}$ and $\ket{1}$ (which is necessary to encode different input states), then we can also prepare an $S$ using a CS. Hence, this gate set is equivalent to just the CS and Hadamard gate.
We see then that we have proven the following.
\begin{theorem}\label{thm:CS-Had-universal}
    The CS+Hadamard gate set is computationally universal. In particular, a Clifford+T computation can be simulated by a CS+Hadamard computation with a linear overhead in the number of samples, qubits and gates needed. 
\end{theorem}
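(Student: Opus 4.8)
The plan is to reduce, in three stages, the estimation of an observable $\mathcal{O}$ through an arbitrary Clifford+$T$ circuit to the estimation of observables through Clifford+CS circuits, keeping track of the overhead incurred at each stage and then noting that the Clifford+CS gate set coincides with CS+Hadamard. The first stage removes the $T$ gates by catalysis. Given a Clifford+$T$ circuit $C$ applied to $\ket{\psi}$, I would adjoin a single fresh qubit in the state $\ket{T}$ and replace each occurrence of a $T$ gate by the Clifford+CS gadget of Example~\ref{ex:cstot}, acting on the relevant wire together with this catalyst. Since Eq.~\eqref{eq:cat-CS-T} shows the gadget returns the catalyst unchanged, the same $\ket{T}$ can be woven through the whole circuit, yielding a Clifford+CS circuit $C'$ with $C'\ket{\psi}\ket{T} = (C\ket{\psi})\ket{T}$ at the cost of one extra qubit and only $O(1)$ extra gates per $T$ gate. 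Because the catalyst factors out at the end, the expectation value is unchanged: $\langle\mathcal{O}\rangle = \langle\mathcal{O}\otimes I\rangle$ taken with respect to $C'$ and $\ket{\psi}\otimes\ket{T}$, as displayed in Eq.~\eqref{eq:expectation-value-T-cat}.

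The second stage eliminates the magic state $\ket{T}$, which can still only be prepared using a $T$ gate. The key observation is that one cannot expand $\ket{T}$ as a combination of pure stabilizer states while keeping every summand in the form of a genuine expectation value, so the argument must be carried out at the level of density operators: using the real affine decomposition of $\ketbra{T}{T}$ into the four stabilizer projectors $\ketbra{+}{+}$, $\ketbra{-i}{-i}$, $\ketbra{0}{0}$, $\ketbra{1}{1}$ recorded in Eq.~\eqref{eq:ketbra-T-decomp}, together with linearity of the trace, one gets $\langle\mathcal{O}\rangle = \sum_{j=1}^{4}\lambda_j\langle\mathcal{O}\rangle_j$, where each $\langle\mathcal{O}\rangle_j$ is the expectation of $\mathcal{O}\otimes I$ with respect to the Clifford+CS circuit $C'$ run on the input $\ket{\psi}\otimes\ket{\phi_j}$, and each $\ket{\phi_j}$ is a stabilizer state preparable by Clifford gates from $\ket{0}$ or $\ket{1}$. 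This is precisely an instance of the sum-of-expectations scheme of Eq.~\eqref{eq:expectation-value-sum}, and since the decomposition coefficients are real with $\sum_j|\lambda_j| = 2\sqrt{2}-1 \approx 1.83$, sampling each $\langle\mathcal{O}\rangle_j$ a number of times proportional to $|\lambda_j|$ incurs only a constant (hence at most linear) multiplicative overhead in samples.

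The third stage is to note that the Clifford+CS gate set, with the ability to prepare $\ket{0}$ and $\ket{1}$ that is needed anyway to encode inputs, is generated by CS and Hadamard alone: the \CNOT gate is built from CS and Hadamard, and an $S$ gate from a CS gate with one input supplied by a suitable ancilla, so every Clifford generator is available. Composing the three stages turns the original Clifford+$T$ computation into a collection of at most four CS+Hadamard computations with a linear overhead in qubits, in total gate count, and in the number of samples needed, which is exactly the claim. The step requiring the most care is the bookkeeping of the second stage — insisting on decomposing $\ketbra{T}{T}$ rather than $\ket{T}$, and checking that the coefficients are real so that the $|\lambda_j|$-proportional sampling strategy is valid — whereas the catalysis identity of Example~\ref{ex:cstot} and the generation of the Clifford group from CS and Hadamard are routine once set up.
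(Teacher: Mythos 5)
Your proposal follows the paper's own proof essentially step for step: catalysing away the $T$ gates with a single $\ket{T}$ ancilla via Example~\ref{ex:cstot}, replacing $\ketbra{T}{T}$ by the four-term real affine stabiliser decomposition of Eq.~\eqref{eq:ketbra-T-decomp} with 1-norm $2\sqrt{2}-1$, and observing that Clifford+CS together with $\ket{0}/\ket{1}$ preparations reduces to CS+Hadamard. The reasoning and overhead bookkeeping are correct and match the paper's argument.
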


\begin{remark}
    Without using the catalysis, we could have also chosen to write each $T$ gate as a magic state injection, and then replace each of the $\ketbra{T}{T}$ states by the Clifford states $\ketbra{\phi_j}{\phi_j}$. When we do this however, we get a number of terms in the decompositions that scales exponentially in the number of $T$ gates, and in particular $\sum_j \lvert\lambda_j\rvert$ scales exponentially, so that the simulation would no longer be efficient. This makes sense, since replacing all the $T$ gates gives us a Clifford circuit, and we don't expect this gate set to be computationally universal.
\end{remark}

\begin{remark}
    CS+Hadamard is in fact approximately universal, as proven by Kitaev~\cite[Lemma~4.6 on p.~1213]{kitaev1997quantum} (note that what he calls $S$ is the Hadamard gate, and $K$ is the $S$ gate). Kitaev proves this using an unproven `geometric lemma' that adding a gate to a set of gates that stabilises a given state creates a larger-dimensional space of gates. This proof is not constructive. He then proves what is now known as the Solovay-Kitaev algorithm to show how you would do it constructively. Though this result also establishes the computational universality of CS+Hadamard, we remark that that construction would, for instance, give an approximate decomposition of the $T$ gate, meaning the cost of implementing the $T$ would scale with the desired precision, whereas our method has a constant overhead.
\end{remark}

\subsection{Real-valued unitaries}

An interesting example of a computationally universal gate set is the set of real-valued unitaries~\cite{aharonov2003simple}. That is, where we only allow unitaries where all matrix entries are real numbers. Obviously this gate set is not approximately universal as we can never approximate any complex-valued unitary (like the $S$ gate), but it turns out we can `simulate' complex-valued unitaries using a real-valued one on a larger set of qubits.

For a (complex-valued) $n$-qubit unitary $U$, let $\Re(U)$ be the real part of $U$. That is: $\Re(U)_{ij} = \Re(U_{ij})$. Similarly define the complex part $\Im(U)$. Then $U = \Re(U) + i \Im(U)$.
Now define the $(n+1)$-qubit unitary $\tilde{U}$ via
\begin{align*}
    \tilde{U}(\ket{0}\otimes \ket{\psi}) \ &:= \ \ket{0}\otimes (\Re(U) \ket{\psi}) + \ket{1} \otimes (\Im(U)\ket{\psi}) \\
    \tilde{U}(\ket{1}\otimes \ket{\psi}) \ &:= \ - \ket{0}\otimes (\Im(U) \ket{\psi}) + \ket{1} \otimes (\Re(U)\ket{\psi})
\end{align*}

While it is clear that $\tilde{U}$ is real-valued, it is not immediately obvious that it is unitary. We show this, and the fact that $\widetilde{UV} = \tilde{U}\tilde{V}$ in Appendix~\ref{app:real-valued}.

This construction from~\cite{aharonov2003simple} can actually also be seen as an example of catalysis as $\tilde{U}\ket{-i}\otimes \ket{\psi} = \ket{-i} \otimes U\ket{\psi}$.
This then might look like we can use the previous sum-of-expectations approach to argue that real-valued unitaries are computationally universal. This does not work however, as we cannot write $\ketbra{-i}{-i}$ as a sum of self-adjoint real matrices, so that we can't represent it directly just using real-valued unitaries and state preparations.
We can however still prove computational universality using a slightly different argument.

% \begin{proposition}\label{prop:computationally-universal-real}
%     Real-valued unitaries are computationally universal.
% \end{proposition}
% \begin{proof}
%     Suppose $C$ is an $n$-qubit circuit built out of unitaries as $C = U_1\cdots U_k$. We then build the real-valued $(n+1)$-qubit circuit $\tilde{C}$ by $\tilde{C} = \tilde{U_1}\cdots \tilde{U_k}$. Then note that $(\bra{0}\otimes \bra{\psi}) \tilde{C} (\ket{0}\otimes \ket{\psi'}) = \bra{\psi}\Re(C)\ket{\psi'}$ and $(\bra{1}\otimes \bra{\psi}) \tilde{C} (\ket{0}\otimes \ket{\psi'}) = \bra{\psi}\Im(C)\ket{\psi'}$. Hence, we can calculate an arbitrary amplitude $\bra{\psi}C\ket{\psi'} = \bra{\psi}\Re(C)\ket{\psi'} + i \bra{\psi}\Im(C)\ket{\psi'}$ by calculating two amplitudes involving $\tilde{C}$.
% \end{proof}

In fact, we can exactly simulate the probability distribution arising from a complex-valued circuit. To see this, first note that if $\ket{\psi}$ and $\ket{\psi'}$ are real-valued states that then $\lvert \bra{\psi} C \ket{\psi'}\rvert^2 = \lvert \bra{\psi} \Re(C) \ket{\psi'}\rvert^2 + \lvert \bra{\psi} \Im(C) \ket{\psi'}\rvert^2$ (this requires some computation to show). On the other hand, if we input $\ket{0}\otimes \ket{\psi'}$ into $\tilde{C}$ and do a measurement marginalising over the first qubit we also get the probabilities:
\[
    \sum_{x=0,1} \lvert \bra{x, \psi} \tilde{C} \ket{0, \psi'} \rvert^2 \ = \ \lvert \bra{\psi} \Re(C) \ket{\psi'}\rvert^2 + \lvert \bra{\psi} \Im(C) \ket{\psi'}\rvert^2 \ = \ \lvert \bra{\psi} C \ket{\psi'}\rvert^2.
\]
So the probability distribution we get for $C$ is the same as that for $\tilde{C}$ when we prepare the first qubit in the $\ket{0}$ state and ignore its measurement outcome.

\begin{proposition}\label{prop:computationally-universal-real}
    Real-valued unitaries are computationally universal.
\end{proposition}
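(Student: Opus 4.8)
The plan is to reduce, in a sampling-overhead-free way, the problem of estimating an output probability of an arbitrary (complex-valued) quantum circuit $C$ to the problem of estimating an output probability of the associated real-valued circuit $\tilde C$ on one extra qubit. The two ingredients we need have essentially already been established in the text preceding the statement: first, the map $U \mapsto \tilde U$ is multiplicative and sends unitaries to real unitaries (proved in Appendix~\ref{app:real-valued}), so $\tilde C$ is genuinely a circuit over the real-valued gate set whenever $C$ is a circuit at all; second, for real input and output states $\ket{\psi'}$, $\ket{\psi}$ one has the identity
\[
  \sum_{x=0,1} \bigl\lvert \bra{x,\psi}\,\tilde C\,\ket{0,\psi'}\bigr\rvert^2 \;=\; \bigl\lvert\bra{\psi}\,\Re(C)\,\ket{\psi'}\bigr\rvert^2 + \bigl\lvert\bra{\psi}\,\Im(C)\,\ket{\psi'}\bigr\rvert^2 \;=\; \bigl\lvert\bra{\psi}\,C\,\ket{\psi'}\bigr\rvert^2,
\]
which says the measurement statistics of $C$ are \emph{exactly} reproduced by those of $\tilde C$ after marginalising over the ancilla.

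Concretely, I would argue as follows. Fix the model of computational universality from Eq.~\eqref{eq:expectation-value}: we want to estimate $\langle\mathcal{O}\rangle = \bra{\psi} C^\dagger \mathcal{O} C \ket{\psi}$ for a circuit $C$ over some approximately universal gate set (say Clifford+$T$), an input state $\ket{\psi}$, and an observable $\mathcal{O}$. Without loss of generality decompose $\mathcal{O}$ into a (real) linear combination of terms of the form $M^\dagger M$ with $M$ a product of computational-basis projectors and the relevant unitaries, so estimating $\langle\mathcal{O}\rangle$ reduces to estimating finitely many output probabilities $p = \lvert\bra{x} C' \ket{\psi}\rvert^2$ of circuits $C'$ built from the same gate set (this is the standard observation that computational universality is equivalent to being able to estimate output probabilities; it incurs only a constant, $C$-independent overhead). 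Next, absorb the input state: since $\ket{\psi}$ is an arbitrary state we may, by the same reasoning used throughout the excerpt, replace it by a fixed computational-basis state prepared by a preceding circuit, or more simply observe that $\ket{\psi}$ may be assumed real after padding (a complex $\ket{\psi}$ can be handled by the same $\,\widetilde{\cdot}\,$ trick applied to the state-preparation unitary). Now apply the real-valued embedding: $\tilde C$ is a circuit over real-valued unitaries on $n+1$ qubits, and by the displayed identity, running $\tilde C$ on $\ket 0 \otimes \ket\psi$, measuring all qubits, and marginalising the ancilla yields a sample from exactly the distribution $x \mapsto \lvert\bra{x} C \ket\psi\rvert^2$. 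Hence any estimator for $p$ from the complex circuit transfers verbatim to $\tilde C$ with the same sample complexity. Since Clifford+$T$ is computationally universal and every step above preserves computational universality with only constant overhead, the real-valued gate set is computationally universal.

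The main obstacle — and the one place the argument needs care rather than bookkeeping — is showing that the identity $\lvert\bra{\psi} C \ket{\psi'}\rvert^2 = \lvert\bra{\psi}\Re(C)\ket{\psi'}\rvert^2 + \lvert\bra{\psi}\Im(C)\ket{\psi'}\rvert^2$ really holds for real $\ket\psi,\ket{\psi'}$; naively $\lvert z\rvert^2 = \Re(z)^2 + \Im(z)^2$ would need $\bra\psi \Re(C)\ket{\psi'}$ and $\bra\psi\Im(C)\ket{\psi'}$ to be the real and imaginary parts of $\bra\psi C\ket{\psi'}$, which is true precisely because $\ket\psi$ and $\ket{\psi'}$ have real coefficients, so that $\bra\psi \Re(C)\ket{\psi'} = \Re(\bra\psi C\ket{\psi'})$ and likewise for $\Im$. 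That is exactly the "this requires some computation to show" remark in the text; I would discharge it by this one-line observation about real coefficients. The only other thing to keep honest is the claim that reducing observable-estimation to output-probability-estimation, and the handling of a complex input state, cost only a constant overhead independent of $C$ — this follows because $\mathcal{O}$ and $\ket\psi$ are fixed parts of the problem instance, not scaling with circuit size, so the argument does not degrade the way the naive $T$-count-exponential decomposition in the preceding remark does.
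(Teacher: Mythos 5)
Your proposal follows the same route the paper does: use the real embedding $C\mapsto\tilde C$, note that for real $\ket\psi,\ket{\psi'}$ one has $\sum_{x=0,1}\lvert\bra{x,\psi}\tilde C\ket{0,\psi'}\rvert^2=\lvert\bra\psi C\ket{\psi'}\rvert^2$, and conclude that preparing the ancilla in $\ket 0$ and marginalising over its measurement outcome exactly reproduces the output distribution of $C$. Your one-line discharge of the ``requires some computation'' remark --- that real coefficients force $\bra\psi\Re(C)\ket{\psi'}=\Re(\bra\psi C\ket{\psi'})$ and likewise for $\Im$, so the claim is just $\lvert z\rvert^2=\Re(z)^2+\Im(z)^2$ --- is the correct justification, and the extra bookkeeping you add (reducing observable estimation to output-probability estimation, arranging the input/output states to be real) is consistent with, if more explicit than, what the paper leaves implicit.
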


\subsection{Toffoli+Hadamard}

Because we can simulate complex-valued quantum circuits using real-valued unitaries in the direct manner described above, we don't need \emph{all} the real-valued unitaries. Given any computationally universal gate set $G$ we only need $\tilde{U}$ for $U\in G$. 

In particular, for CS+Hadamard, we can check that we get a real-valued unitary $\widetilde{\text{CS}}$ that is equivalent to a Toffoli up to some swaps. With the Hadamard we just get $\tilde{H} = I\otimes H$. Hence, when we encode the CS+Hadamard gate set, we get the Toffoli+Hadamard gate set~\cite{aharonov2003simple}.
We can hence do the following: starting with a Clifford+T computation, we write it as an ensemble of CS+Hadamard circuits. We then encode each of these circuits into a real-valued Toffoli+Hadamard circuit. By doing this we can efficiently simulate the original Clifford+T circuit. We see then that Toffoli+Hadamard circuits are also computationally universal.
\begin{theorem}
    The Toffoli+Hadamard gate set is computationally universal.
\end{theorem}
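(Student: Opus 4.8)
The plan is to chain together the results already established in the excerpt. From Theorem~\ref{thm:CS-Had-universal} we know that any Clifford+$T$ computation can be simulated by an ensemble of CS+Hadamard circuits $\{(V_j,\ket{\psi_j},\mathcal{O}_j)\}$ with only linear overhead. The strategy is to take each such CS+Hadamard circuit and push it through the real-encoding functor $U\mapsto\tilde U$ of Section~\ref{app:real-valued}, using the exact simulation of probability distributions established just before Proposition~\ref{prop:computationally-universal-real}, and then observe that the resulting real-valued gates lie (up to swaps) in the Toffoli+Hadamard gate set.

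First I would fix notation: recall that $\widetilde{UV}=\tilde U\tilde V$ (proven in Appendix~\ref{app:real-valued}), so the encoding of a circuit is the circuit of encodings; hence encoding a CS+Hadamard circuit only requires knowing $\widetilde{\text{CS}}$ and $\tilde H$. The second step is the explicit computation that $\tilde H = I\otimes H$ and that $\widetilde{\text{CS}}$ equals the Toffoli gate up to a fixed permutation of qubits (this is a routine $8\times 8$ matrix check, using that CS is diagonal so $\Re(\text{CS})$ and $\Im(\text{CS})$ are both diagonal with $0/\pm1$ entries, which makes $\widetilde{\text{CS}}$ a permutation matrix; one then identifies the permutation with $\text{SWAP}\cdot\text{Tof}\cdot\text{SWAP}$ or similar). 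Since SWAP is itself expressible over $\{\text{Tof},H\}$ (indeed over CNOT, and CNOT $=(I\otimes H)\,\text{CZ}\,(I\otimes H)$ with CZ obtainable from Toffoli by fixing a control, or more directly three CNOTs), the ancilla-rearranging swaps cost nothing. The third step assembles the pieces: given the ensemble from Theorem~\ref{thm:CS-Had-universal}, replace each $V_j$ by $\tilde V_j$ acting on one extra qubit initialised to $\ket0$, keep the weights $\lambda_j$ and observables $\mathcal{O}_j\otimes I$, and invoke the exact distributional equality $\sum_{x}|\bra{x,\psi}\tilde C\ket{0,\psi'}|^2 = |\bra\psi C\ket{\psi'}|^2$ to conclude that each $\langle\mathcal{O}\rangle_j$ is unchanged, hence $\langle\mathcal{O}\rangle=\sum_j\lambda_j\langle\mathcal{O}\rangle_j$ still holds; this exhibits the Clifford+$T$ observable as a sum-of-expectations over Toffoli+Hadamard circuits, with the same $\sum_j|\lambda_j|$ overhead as before (plus one extra qubit per circuit), so Toffoli+Hadamard is computationally universal.

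The main obstacle, such as it is, is the bookkeeping in the second step: one must verify carefully that $\widetilde{\text{CS}}$ really is a Toffoli-type gate rather than merely ``some'' real permutation, i.e.\ that the $0/\pm1$ pattern of $\Re(\text{CS})$ and $\Im(\text{CS})$ assembles into exactly the right $8\times8$ permutation, and that the sign bookkeeping in the definition of $\tilde U$ (the minus sign on the $\ket1\otimes\Im(U)\ket\psi$ block) does not introduce a non-permutation entry. Everything else is a direct appeal to results already in the paper: universality and overhead bounds from Theorem~\ref{thm:CS-Had-universal}, functoriality and unitarity of the real encoding from Appendix~\ref{app:real-valued}, and the exact probability-distribution simulation from the discussion preceding Proposition~\ref{prop:computationally-universal-real}. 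No new analytic input is needed, which is precisely the point emphasised in the introduction.
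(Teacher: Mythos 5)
Your proposal follows the same route as the paper: reduce a Clifford+$T$ computation to an ensemble of CS+Hadamard circuits via Theorem~\ref{thm:CS-Had-universal}, push each circuit through the real-encoding $U\mapsto\tilde U$ of Appendix~\ref{app:real-valued} using its functoriality, and appeal to the exact distributional identity derived before Proposition~\ref{prop:computationally-universal-real} to preserve expectation values, so that the remaining job is to identify $\tilde H$ and $\widetilde{\text{CS}}$ as Toffoli+Hadamard gates. That is exactly what the paper does.

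There is, however, a factual slip in your careful step. Writing the encoding in block form as $\tilde U = \begin{pmatrix}\Re(U) & -\Im(U)\\ \Im(U) & \Re(U)\end{pmatrix}$ and using $\text{CS}=\diag(1,1,1,i)$, so $\Re(\text{CS})=\diag(1,1,1,0)$ and $\Im(\text{CS})=\diag(0,0,0,1)$, the restriction of $\widetilde{\text{CS}}$ to the span of $\ket{0,11}$ and $\ket{1,11}$ is $\begin{pmatrix}0 & -1 \\ 1 & 0\end{pmatrix}$. The $-1$ does survive, so $\widetilde{\text{CS}}$ is \emph{not} a permutation matrix; it is (after reordering wires) $CC(-iY)=CC(XZ)$, whose eigenvalues $\{1,\pm i\}$ show it cannot be conjugate to a Toffoli by any wire permutation. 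The paper's phrase ``equivalent to a Toffoli up to some swaps'' is equally imprecise. The conclusion is nonetheless unaffected: $CC(XZ)=CC(X)\cdot CC(Z)=\text{Tof}\cdot(I\otimes I\otimes H)\,\text{Tof}\,(I\otimes I\otimes H)$ on the appropriate wires, so $\widetilde{\text{CS}}$ is still a Toffoli+Hadamard circuit, which is all the argument needs. You should state the correct identification rather than assert that $\widetilde{\text{CS}}$ is a permutation.
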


Note that we could also do a version of this without first encoding a complex unitary as a real unitary by using the fact that we can catalyse $T$ gates using a CCZ gate. We then get a version of Theorem~\ref{thm:CS-Had-universal} for the Clifford+CCZ gate set. Using this method, we however also need the $S$ gate, which is not necessary using the above approach.

Our proof of the universality of Toffoli+Hadamard follows along the same lines as that of Aharonov~\cite{aharonov2003simple}, namely, by reducing it to CS+Hadamard. However, the original proof of CS+Hadamard universality is non-constructive and relies on a series of non-trivial encodings of unitaries into larger-dimensional spaces, whilst our proof reduces the problem to Clifford+$T$ universality, which itself simply reduces to universality of CNOT+single-qubit unitaries.

\subsection{Catalysing general gate sets}

We can generalise these specific statements of computational universality to a general statement about gate sets that have catalytic embeddings. We present he proof in Appendix~\ref{app:proof-catalysis}.

\begin{theorem}\label{thm:catalysis-universality}
    Let $\mathcal{U}$ and $\mathcal{V}$ be gate sets which have a catalytic embedding $(\phi, \ket{c})$ as in Definition~\ref{def:catemb} and suppose that $\ketbra{c}{c}$ can be written as a sum $\sum_j\lambda_j \ketbra{\psi_j}{\psi_j}$ where each of the $\ket{\psi_i}$ can be prepared by a circuit over $\mathcal{V}$. Then if the bigger gate set $\mathcal{U}$ is computationally universal, the smaller gate set $\mathcal{V}$ is also computationally universal.
\end{theorem}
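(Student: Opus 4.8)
The plan is to replay, at the abstract level of a catalytic embedding, the sum-of-expectations argument used for CS+Hadamard (the reasoning leading to Theorem~\ref{thm:CS-Had-universal}). Concretely, I would: (i) use computational universality of $\mathcal{U}$ to write the target expectation value as an ensemble of $\mathcal{U}$-computations; (ii) push each $\mathcal{U}$-circuit through $\phi$ gate-by-gate to obtain a $\mathcal{V}$-circuit on $m$ extra qubits whose input contains the catalyst $\ket{c}$; and (iii) expand $\ketbra{c}{c}=\sum_j\lambda_j\ketbra{\psi_j}{\psi_j}$ to turn each such term into a sum of honest $\mathcal{V}$-computations, then collect everything into a single ensemble of the form~\eqref{eq:expectation-value-sum}.

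For step (i), fix an expectation value $\langle\mathcal{O}\rangle$ one wishes to estimate. Since $\mathcal{U}$ is computationally universal there are finitely many $\mathcal{U}$-circuits $V_k$, input states $\ket{\chi_k}$ (I would assume, as in the Clifford+CS case, that computational basis states may be prepared, so these are $\ket{0\cdots 0}$), observables $\mathcal{O}_k$, and weights $\nu_k$ with $\sum_k|\nu_k|$ polynomially bounded, such that $\langle\mathcal{O}\rangle=\sum_k\nu_k\langle\mathcal{O}_k\rangle_k$, where $\langle\mathcal{O}_k\rangle_k=\tr[\mathcal{O}_k\, V_k\ketbra{\chi_k}{\chi_k}V_k^\dagger]$. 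For step (ii), writing $V_k=G_{k,\ell}\cdots G_{k,1}$ with each $G_{k,i}\in\mathcal{U}$, set $\phi(V_k):=\phi(G_{k,\ell})\cdots\phi(G_{k,1})$, a circuit over $\mathcal{V}$ on the original register plus the $m$ catalyst qubits. An induction on $\ell$ using Definition~\ref{def:catemb} gives $\phi(V_k)(\ket{\psi}\ket{c})=(V_k\ket{\psi})\ket{c}$: after each factor the state is again of the form $\ket{\psi'}\ket{c}$, so the next factor applies and restores the catalyst. Hence $\langle\mathcal{O}_k\rangle_k=\tr[(\mathcal{O}_k\otimes I)\,\phi(V_k)(\ketbra{\chi_k}{\chi_k}\otimes\ketbra{c}{c})\phi(V_k)^\dagger]$ — a $\mathcal{V}$-computation in all respects except that its input contains $\ketbra{c}{c}$, which need not be preparable over $\mathcal{V}$.

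For step (iii), substitute $\ketbra{c}{c}=\sum_j\lambda_j\ketbra{\psi_j}{\psi_j}$ and use linearity of the trace in that tensor slot to get $\langle\mathcal{O}_k\rangle_k=\sum_j\lambda_j\langle\mathcal{O}_k\rangle_{k,j}$, where $\langle\mathcal{O}_k\rangle_{k,j}$ is the expectation of $\mathcal{O}_k\otimes I$ obtained by running $\phi(V_k)$ on $\ket{\chi_k}\otimes\ket{\psi_j}$; this is a genuine $\mathcal{V}$-computation since $\ket{\chi_k}$ is a basis state and each $\ket{\psi_j}$ is preparable over $\mathcal{V}$ by hypothesis. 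Therefore $\langle\mathcal{O}\rangle=\sum_{k,j}\nu_k\lambda_j\langle\mathcal{O}_k\rangle_{k,j}$, an ensemble of $\mathcal{V}$-computations as in~\eqref{eq:expectation-value-sum}, with sampling overhead $\sum_{k,j}|\nu_k\lambda_j|=\big(\sum_k|\nu_k|\big)\big(\sum_j|\lambda_j|\big)$. Since $\sum_j|\lambda_j|$ is a fixed constant depending only on the embedding (e.g.\ $2\sqrt2-1$ for the decomposition~\eqref{eq:ketbra-T-decomp}) and $\phi$ adds only $m$ qubits and a bounded number of gates per gate, all overheads remain polynomial, so $\mathcal{V}$ is computationally universal.

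The step I expect to need the most care is the gluing in (ii): one must check that composing the per-gate maps $\phi(G)$ really yields a single $\mathcal{V}$-circuit implementing $V_k$ that returns the catalyst intact — that is, that $\ket{c}$ can be weaved through an entire circuit and not merely through one gate — together with pinning down the ambient conventions of ``computational universality'' (admissible input preparations, admissible observables, and the polynomial-overhead bookkeeping) precisely enough that the final ensemble genuinely witnesses universality of $\mathcal{V}$. The density-matrix manipulation in (iii) is, by contrast, exactly the one already carried out for $\ket{T}$ around~\eqref{eq:ketbra-T-decomp}, and should require no new ideas.
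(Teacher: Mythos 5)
Your proposal is correct and takes essentially the same approach as the paper's proof: convert a $\mathcal{U}$-circuit to a catalyst-preserving $\mathcal{V}$-circuit via $\phi$, then expand $\ketbra{c}{c}=\sum_j\lambda_j\ketbra{\psi_j}{\psi_j}$ and use linearity of the trace to obtain an ensemble of genuine $\mathcal{V}$-computations with multiplicative overhead $\sum_j\lvert\lambda_j\rvert$. You are somewhat more explicit than the paper in two places — you spell out the gate-by-gate extension of $\phi$ to whole circuits with the induction that the catalyst is returned intact after each factor, and you carry an outer ensemble over $k$ coming from $\mathcal{U}$'s own computational universality — but these are refinements of the same argument, not a different route.
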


In this result we needed the catalyst $\ketbra{c}{c}$ to be expressible as a sum of states that can be prepared by the smaller gate set. In the examples of catalysis we saw earlier, these states were all stabiliser states, and hence could be expressed as the gate set included all Cliffords. There is however nothing special about a stabiliser decomposition, and any decomposition into expressible states is sufficient. Even an approximate decomposition would suffice, as long as the 1-norm $\sum_j \lvert \lambda_j \rvert$ of the approximate decomposition scales polynomially in the desired error rate $\varepsilon$. This condition is needed because the error in the catalysis state needs to be lower if it is used more often, and hence the 1-norm should not increase too rapidly.

\section{Catalysing Completeness}

We can also use catalysis in order to define extensions of graphical calculi and prove completeness for them. 
To see how this works, we want to first generalise the $T$ gate catalysis of Eq.~\eqref{eq:cat-CS-T}. First, as our goal will just be to produce states, we can plug $\ket{+}$ into the top wire of Eq.~\eqref{eq:cat-CS-T}. We can then simplify the expression to a more symmetric form:
\begin{equation}\label{eq:cat-CS-T-ZH}
    \tikzfig{cat-CS-T-ZH}
\end{equation}
We can then identify the underlying reason this catalysis works. It is because we have the following 'Euler decomposition' of the H-box with an $i$ phase:
\begin{equation}
    \tikzfig{H-box-i-Fourier}
\end{equation}
Here we use the fact that $\tikzfig{H-a-state} \ = \ \tikzfig{Z-a-state}$ to write the phases as H-boxes. We do this because such a rule doesn't just hold for an H-box with a label that is a complex phase like $e^{i\alpha}$, it in fact holds for any complex $a\neq 0$:
\begin{equation}\label{eq:H-box-a-Fourier}
    \tikzfig{H-box-a-Fourier}
\end{equation}
This then allows us to write down a generalisation of Eq.~\eqref{eq:cat-CS-T-ZH} to arbitrary H-boxes:
\begin{equation}\label{eq:H-box-catalysis}
    \tikzfig{H-box-catalysis}
\end{equation}
Note that this equation first appeared in~\cite{EPTCS384.6}.
Here we wrote $a^2$ in the 2-ary H-box instead of $a$ so that we don't have to work with square roots. When we take $a=e^{i\frac\pi4}$ we get Eq.~\eqref{eq:cat-CS-T-ZH}, but this works for any value. A particularly simple, but still interesting case is when $a=e^{i\frac\pi2} = i$. Translating this back into circuit form gives us a catalysis of $\ket{i} := \ket{0} + i\ket{1}$ states using a CZ. As a rewrite rule this is essentially equivalent to the Euler decomposition of a Hadamard. While this rule itself is simple, it already demonstrates the power of the catalysis framework in proving completeness.

The phase-free ZH-calculus is complete for the ring $\Z[\frac12]$.
By adding the generator $\tikzfig{H-box-i}$, a single-ary H-box with label $a=i$, we get a universal representation for the ring $\Z[\frac12,i]$~\cite{zhcompleteness2020}. As it turns out, adding the rule Eq.~\eqref{eq:H-box-catalysis} for $a=i$ to the already existing rules for the phase-free fragment is already \emph{almost} enough to get a complete calculus for this bigger fragment $\Z[\frac12,i]$ which includes $\tikzfig{H-box-i}$.

To see this, we first consider what a generic diagram in the $\Z[\frac12,i]$ fragment looks like. We added the generator $\tikzfig{H-box-i}$, so now a diagram consists of generators from the old $\Z[\frac12]$ fragment plus this new generator, used an arbitrary number of times. Using Eq.~\eqref{eq:H-box-catalysis} we can however reduce all these separate instances of $\tikzfig{H-box-i}$ into just one of them, reducing the complexity of the diagram. That is, given some diagram $D$ in the $\Z[i]$ fragment, we can rewrite it to a diagram $D'$ containing just generators from the $\Z[\frac12]$ fragment such that:
\begin{equation}\label{eq:diagram-catalysis-i}
    \tikzfig{diagram-catalysis-i}
\end{equation}
We are for now ignoring the edge case where the original diagram did not contain any instance of $\tikzfig{H-box-i}$.

As a shorthand, we will write $D'[\ket{\psi}]$ for the diagram we get when we plug $\ket{\psi}$ into the bottom input in Eq.~\eqref{eq:diagram-catalysis-i}. So here we have $D = D'[\tikzfig{H-box-i}]$.
Note that $\tikzfig{H-box-i} = \ket{0} + i\ket{1}$. Hence, if we expand it like this we see that $D$ is equal to a sum of two diagrams: $D'$ where we plugged in $\ket{0}$ into the bottom wire, and $iD'$ where we plugged $\ket{1}$ into the bottom wire: $D = D'[\ket{0}] + iD'[\ket{1}]$.

Now suppose we have two diagrams $D_1$ and $D_2$ in the $\Z[i]$ fragment and that they implement the same linear map: $D_1 = D_2$. We can both decompose them as described above to get $D'_1[\ket{0}] + iD'_1[\ket{1}] = D'_2[\ket{0}] + i D'_2[\ket{1}]$. Each of these $D'_j[\ket{x}]$ diagrams represents a matrix that is entirely real-valued, so the only way for this equation of complex matrices to hold, is if it holds for the real part and for the complex part separately:
\begin{equation}
    D'_1[\ket{0}] \ =\  D'_2[\ket{0}] \qquad \quad D'_1[\ket{1}] \ =\  D'_2[\ket{1}]
\end{equation}
We then conclude that $D'_1$ and $D'_2$ are equal when we input either $\ket{0}$ or $\ket{1}$ into the bottom wire. As these states form a basis, this must then hold for any input. We can then leave this wire open and still have an equality:
\begin{equation}\label{eq:diagram-catalysis-equality}
    \tikzfig{diagram-catalysis-equality}
\end{equation}
We have this equality as linear maps, but both diagrams are in the $\Z[\frac12]$ fragment for which we have completeness. We hence know how to rewrite one into the other using the rules of the phase-free ZH-calculus. This gives us then a path to rewrite the original $D_1$ into $D_2$:
\begin{equation}\label{eq:diagram-catalysis-completeness}
    \tikzfig{diagram-catalysis-completeness}
\end{equation}
Here each equality is now a diagrammatic equality, and with (*) we denote we are using rewrites from the original complete calculus for the $\Z$ fragment. This would give us completeness for the fragment $\Z[\frac12,i]$, except that we have ignored an edge case.
We can only rewrite a diagram in the $\Z[i]$ fragment as in Eq.~\eqref{eq:diagram-catalysis-i} if there is at least one generator \tikzfig{H-box-i} present in the diagram. In fact, we currently haven't assumed any rewrite rule that relate a diagram containing a \tikzfig{H-box-i} to one that does not contain any \tikzfig{H-box-i}. This means in particular that our currently considered rule set cannot prove the following true equation:
\begin{equation}\label{eq:X-H-i-empty}
    \tikzfig{X-H-i-empty}
\end{equation}
However, when we also add Eq.~\eqref{eq:X-H-i-empty} as an additional rule, then this problem is solved and it \emph{is} true that we can then always rewrite a diagram in the $\Z[i]$ fragment as in Eq.~\eqref{eq:diagram-catalysis-i}: if the diagram contains at least one \tikzfig{H-box-i} we can already use Eq.~\eqref{eq:H-box-catalysis} to transform to the form of Eq.~\eqref{eq:diagram-catalysis-i}, and if it does not, we can use Eq.~\eqref{eq:X-H-i-empty} once to introduce one \tikzfig{H-box-i}, in which case it is also in the form of Eq.~\eqref{eq:diagram-catalysis-i}.

\begin{proposition}
    The graphical calculus consisting of the phase-free ZH generators and $\tikzfig{H-box-i}$, together with the phase-free rewrite rules of Figure~\ref{fig:phasefree-rules} augmented with the catalysis rule Eq.~\eqref{eq:H-box-catalysis} for $a=i$, and the rule \tikzfig{X-H-i-empty} is complete and universal for the ring $\Z[\frac12,i]$.
\end{proposition}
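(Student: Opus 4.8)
The plan is to treat universality and completeness separately, with essentially all of the work in completeness. Universality is immediate: the text already records (citing~\cite{zhcompleteness2020}) that adjoining the single-ary $H$-box with label $i$ to the phase-free generators gives a universal representation of all matrices over $\Z[\frac12,i]$, and adding rewrite rules cannot change which linear maps are representable. So I would spend the argument on completeness, following exactly the catalysis reduction sketched in~\eqref{eq:diagram-catalysis-i}--\eqref{eq:diagram-catalysis-completeness}, organised into three steps.

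First I would prove a normal-form lemma: every diagram $D$ over the extended generator set can be rewritten, using only the catalysis rule~\eqref{eq:H-box-catalysis} at $a=i$ and the rule~\eqref{eq:X-H-i-empty}, into the catalysed form of~\eqref{eq:diagram-catalysis-i}, namely $D = D'[\tikzfig{H-box-i}]$ where $D'$ is a phase-free ZH-diagram with one extra free wire. If $D$ contains no copy of $\tikzfig{H-box-i}$, I apply~\eqref{eq:X-H-i-empty} in reverse once to introduce one. If $D$ contains two or more copies, I apply~\eqref{eq:H-box-catalysis} at $a=i$: since $i^2=-1\in\Z[\frac12]$, the $2$-ary $H$-box on the right-hand side carries the default label $-1$ and is therefore phase-free, so one application trades a copy of $\tikzfig{H-box-i}$ (plus a second copy used as catalyst, which is handed back) for phase-free structure, strictly decreasing the number of copies. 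Here I use that the calculus is symmetric monoidal (indeed compact closed), so the catalyst wire can be routed to whichever pair of copies I choose. Iterating terminates with exactly one copy remaining.

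Second, given two diagrams $D_1, D_2$ over the extended generators denoting the same linear map, I put each into catalysed form $D_j = D_j'[\tikzfig{H-box-i}]$. Since $\tikzfig{H-box-i}=\ket0+i\ket1$ and the $D_j'$ have entries in $\Z[\frac12]\subseteq\R$, the matrix of $D_j$ decomposes as $D_j'[\ket0]+i\,D_j'[\ket1]$ with both summands real; uniqueness of the real/imaginary parts of a complex matrix forces $D_1'[\ket0]=D_2'[\ket0]$ and $D_1'[\ket1]=D_2'[\ket1]$ as linear maps, and since $\ket0,\ket1$ span $\C^2$ this gives $D_1'=D_2'$ as maps on the extra wire, i.e.~\eqref{eq:diagram-catalysis-equality}. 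Now $D_1'$ and $D_2'$ are phase-free ZH-diagrams representing equal $\Z[\frac12]$-matrices, so by completeness of the phase-free ZH-calculus (Figure~\ref{fig:phasefree-rules}) there is a rewrite sequence turning $D_1'$ into $D_2'$. Plugging $\tikzfig{H-box-i}$ into the free wire throughout turns this into a rewrite from $D_1'[\tikzfig{H-box-i}]$ to $D_2'[\tikzfig{H-box-i}]$ in the extended calculus; prepending and appending the catalysed-form rewrites from the first step yields a full derivation $D_1 \rightsquigarrow D_1'[\tikzfig{H-box-i}] \rightsquigarrow D_2'[\tikzfig{H-box-i}] \rightsquigarrow D_2$, exactly as in~\eqref{eq:diagram-catalysis-completeness}. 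This establishes completeness.

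The main obstacle is the normal-form lemma of the first step -- more precisely, checking both that the reduction terminates and that it never reintroduces a non-phase-free generator. Termination is easy (each use of~\eqref{eq:H-box-catalysis} strictly decreases the count of $\tikzfig{H-box-i}$'s, and~\eqref{eq:X-H-i-empty} is used at most once), and phase-freeness of the right-hand side is precisely the fact that $i^2\in\Z[\frac12]$; this is also why the argument generalises verbatim to a tower $\Z[\frac12]\subseteq\Z[\frac12][a_1]\subseteq\cdots$ with $a_j^2\in\Z[\frac12][a_1,\dots,a_{j-1}]$, eliminating the $a_k$-boxes first, then the $a_{k-1}$-boxes, and so on down. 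The only other delicate point is scalar bookkeeping: for the ``plug into the free wire'' move to be sound, both~\eqref{eq:H-box-catalysis} and~\eqref{eq:X-H-i-empty} must hold as exact equalities of linear maps, including scalar factors, which is exactly why the scalar rule~\eqref{eq:X-H-i-empty} is needed alongside the catalysis rule.
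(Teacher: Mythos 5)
Your proposal is correct and follows essentially the same route as the paper: reduce to the catalysed normal form $D = D'[\tikzfig{H-box-i}]$ via the catalysis rule (using the scalar rule once to introduce an $\tikzfig{H-box-i}$ if none is present), split into real and imaginary components $D'[\ket0]+iD'[\ket1]$, use that both summands are $\Z[\frac12]$-valued to obtain $D_1'=D_2'$ with the catalyst wire open, and finish with phase-free completeness. The paper's Appendix~\ref{app:completeness} frames the decisive step as uniqueness of the $z_1+a_1z_2$ decomposition in $\Z[\frac12,a_1]$ for $a_1\not\in\Z[\frac12]$, which specialises to your real/imaginary split when $a_1=i$; you additionally make explicit the termination of the catalysis reduction, which the paper leaves implicit.
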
 

This trick for extending the calculus doesn't just work for $i$: it works for any complex number $a\neq 0$ such that $a^2\in \Z$ using a very similar argument. We can also iterate it: once we have a calculus complete and universal for $\Z[\frac12,a]$ we can pick any $b\neq 0$ such that $b^2\in \Z[\frac12,a]$ and augment the calculus with the appropriate catalysis and scalar-introduction rule to get a new complete calculus for $\Z[\frac12,a,b]$.
We present the proof in Appendix~\ref{app:completeness}.
\begin{theorem}\label{thm:completeness-extension}
    Let $a_1,\ldots, a_k$ be a series of non-zero complex numbers such that $a_j^2\in \Z[\frac12, a_1,\ldots, a_{j-1}]$. Then the phase-free ZH-calculus augmented with generators \tikzfig{H-box-a-j} and the following rules is complete for the ring $\Z[\frac12, a_1,\ldots, a_k]$:
    \begin{equation*}
        \tikzfig{H-box-catalysis-short} \quad \text{and} \quad \tikzfig{X-H-a-empty} \quad \text{ for all } a=a_j
    \end{equation*}
    Here the $a^2$ H-box should be understood as short-hand for some diagram in the smaller fragment representing the matrix for that H-box.
\end{theorem}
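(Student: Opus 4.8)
The plan is to prove the statement by induction on $k$, peeling off one ring extension at a time; the argument for $\Z[\frac12,i]$ given above is precisely the case $k=1$, $a_1=i$, and the general case is that argument carried out with more indices. The base case $k=0$ is the completeness of the phase-free ZH-calculus for $\mathbb D=\Z[\frac12]$, which we assume. For the inductive step, set $R:=\Z[\frac12,a_1,\ldots,a_{k-1}]$ and $R':=R[a_k]$, and assume the ZH-calculus augmented with the generators and rules for $a_1,\ldots,a_{k-1}$ is complete and universal for $R$. Because $a_k^2\in R$, the inductive universality supplies an $R$-fragment diagram representing the $a_k^2$-labelled H-box, so the catalysis rule \eqref{eq:H-box-catalysis} for $a=a_k$ and the accompanying scalar-introduction rule, which is the $a_k$-instance of \eqref{eq:X-H-i-empty}, are well-defined additions to the calculus.

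Universality for $R'$ is immediate: since $a_k^2\in R$, every element of $R'$ has the form $p+a_k q$ with $p,q\in R$, so any matrix $M$ over $R'$ decomposes as $M=M_0+a_k M_1$ with $M_0,M_1$ over $R$. By the inductive universality there is an $R$-fragment diagram $D'$ with one extra input wire for which $D'[\ket{0}]=M_0$ and $D'[\ket{1}]=M_1$; plugging the single-ary $a_k$-labelled H-box, whose vector is $\ket{0}+a_k\ket{1}$, into that wire yields a diagram representing $M$.

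For completeness, let $D_1,D_2$ be diagrams in the $R'$-fragment implementing the same linear map. Exactly as in the $\Z[\frac12,i]$ case, first fuse all occurrences of the $a_k$-labelled H-box in each $D_i$ into a single one using \eqref{eq:H-box-catalysis}, and in the edge case where some $D_i$ contains no $a_k$-box use the scalar-introduction rule once to create one; this rewrites each $D_i$ into the form of \eqref{eq:diagram-catalysis-i}, i.e.\ an $R$-fragment diagram $D_i'$ with one extra open wire into which the $a_k$-box is plugged. Expanding the $a_k$-box as $\ket{0}+a_k\ket{1}$ writes the linear map of $D_i$ as $A_i+a_k B_i$, where $A_i$ and $B_i$ are the $R$-valued matrices obtained by plugging $\ket{0}$ and $\ket{1}$ into $D_i'$. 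Equating $A_1+a_k B_1 = A_2+a_k B_2$ and invoking $R$-linear independence of $\{1,a_k\}$ gives $A_1=A_2$ and $B_1=B_2$; since $\ket{0},\ket{1}$ is a basis, $D_1'$ and $D_2'$ implement the same linear map. By the induction hypothesis they can be rewritten into one another using only the $R$-fragment rules, and plugging the $a_k$-box back into both sides lifts this to a diagrammatic rewriting of $D_1$ into $D_2$.

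The delicate point — and the one I expect to be the main obstacle — is the $R$-linear independence of $\{1,a_k\}$, which is what legitimises splitting the equation of $R'$-matrices into two equations over $R$. This can fail only if $a_k\in\mathrm{Frac}(R)$, and since $a_k^2\in R$ makes $a_k$ integral over $R$, one must verify that each ring $\Z[\frac12,a_1,\ldots,a_j]$ in the tower is a GCD domain (equivalently, is integrally closed in its fraction field), so that $a_k^2\in R$ together with $a_k\in\mathrm{Frac}(R)$ force $a_k\in R$; inverting $2$ is what makes this hold in the quadratic number rings of interest. The residual degenerate case $a_k\in R$, where $\{1,a_k\}$ is dependent and the split above is unavailable, must be dispatched separately: there the extension is trivial, the new generator is redundant, and one shows that the catalysis and scalar rules already suffice to rewrite the $a_k$-box into its $R$-fragment representative. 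Running the induction through $j=1,\ldots,k$ and observing that the resulting calculus is exactly the one in the statement completes the proof.
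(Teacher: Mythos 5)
Your proof takes essentially the same route as the paper's: induct on $k$, fuse all $a_k$-boxes into one via catalysis (introducing one with the scalar rule if none is present), expand the resulting map as $A+a_kB$ with $A,B$ over $R:=\Z[\frac12,a_1,\ldots,a_{k-1}]$, and split the equation into two $R$-equations to invoke the inductive completeness. The substantive difference is that you explicitly flag the $R$-linear independence of $\{1,a_k\}$ as the load-bearing step, whereas the paper dispatches it with ``we just observe.''

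Your instinct to flag it is correct, but your proposed justification does not hold in general. You claim that inverting $2$ makes every ring $\Z[\frac12,a_1,\ldots,a_j]$ in the tower integrally closed, so that $a_k^2\in R$ together with $a_k$ lying in the fraction field of $R$ would force $a_k\in R$, leaving $a_k\in R$ as the only degenerate case. This fails: take $a_1=3\sqrt{2}$ (so $a_1^2=18\in\Z[\frac12]$) and $a_2=\sqrt{2}$ (so $a_2^2=2\in\Z[\frac12,3\sqrt{2}]$). Then $R=\Z[\frac12,3\sqrt{2}]$ is \emph{not} integrally closed: $\sqrt{2}\notin R$ (one would need $\tfrac13\in\Z[\frac12]$), yet $-6\cdot 1+(3\sqrt{2})\cdot\sqrt{2}=0$ is a nontrivial $R$-dependence of $\{1,\sqrt{2}\}$, so the decomposition $z_1+a_2z_2$ is not unique and the split step breaks, even though $a_2\notin R$. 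So the dichotomy ``either $\{1,a_k\}$ is $R$-independent, or $a_k\in R$'' is not available from the stated hypotheses alone. The same gap is present in the paper's proof, which asserts uniqueness of the decomposition for the inductive step without argument; both arguments are sound for the towers actually used in the paper ($a_1=i$, $a_2=e^{i\pi/4}$, and more generally the Clifford-cyclotomic cases, where the intermediate rings are localizations of rings of cyclotomic integers and hence integrally closed), but an additional hypothesis or lemma is needed to justify the linear-independence step at this level of generality.
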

    
When we take $a_1 = i$ and $a_2 = e^{i\frac\pi4}$ we get the ring $\Z[\frac12,i,e^{i\frac\pi4}] = \Z[i,\frac{1}{\sqrt{2}}]$ corresponding to Clifford+T computation, and in this case we can simplify the rules a bit more to get a simple axiomatisation of the Clifford+T maps.
\begin{proposition}
    The phase-free ZH-calculus augmented with H-boxes with a label of $i$ and $e^{i\frac\pi4}$ and the following rules 
    % catalysis rule Eq.~\eqref{eq:H-box-catalysis} and scalar rule \tikzfig{X-H-a-empty} for $a=i$ and $a=e^{i\frac\pi4}$ 
    is complete for matrices with entries in the ring $\Z[i,\frac{1}{\sqrt{2}}]$:
    \begin{equation*}
        \tikzfig{Clifford-T-rules}
    \end{equation*}
\end{proposition}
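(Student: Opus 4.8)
The plan is to obtain this Proposition as the two-step instance of Theorem~\ref{thm:completeness-extension}, followed by a purely bookkeeping argument that the compact rule set displayed above is inter-derivable with the generic one produced by that theorem. Concretely, I would apply Theorem~\ref{thm:completeness-extension} to the tower $\mathbb{D} \subseteq \mathbb{D}[i] \subseteq \mathbb{D}[i, e^{i\pi/4}]$ with $a_1 = i$ and $a_2 = e^{i\pi/4}$: both are non-zero, $a_1^2 = -1 \in \mathbb{D}$ and $a_2^2 = i \in \mathbb{D}[i]$, and $\mathbb{D}[i, e^{i\pi/4}] = \Z[i,\tfrac{1}{\sqrt2}]$. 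The theorem then already delivers a calculus that is complete and universal for $\Z[i,\tfrac{1}{\sqrt2}]$: the phase-free generators plus $i$- and $e^{i\pi/4}$-labelled H-boxes, plus for each of $a = i$ and $a = e^{i\pi/4}$ the catalysis rule Eq.~\eqref{eq:H-box-catalysis} together with the scalar-introduction rule Eq.~\eqref{eq:X-H-a-empty}. Universality is thus immediate; the content of the Proposition is that the specific, slimmed-down axioms displayed above generate exactly the same equational theory.

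Next I would carry out the simplification. The key observation is that in this particular tower the shorthand ``$a^2$ H-box'' never needs to stand for a non-trivial diagram: for $a = i$ it is the default ($-1$-labelled) H-box, which is already a phase-free generator, and for $a = e^{i\pi/4}$ it is the $i$-labelled H-box, which we have just adjoined as a generator. Hence both instances of Eq.~\eqref{eq:H-box-catalysis} become finite rules over the stated generator set with no appeal to an auxiliary subdiagram, and one can present them in the streamlined graphical form displayed in the Proposition. I would then show that the two copies of the scalar-introduction rule Eq.~\eqref{eq:X-H-a-empty} (one per new generator) can be trimmed or merged in the Clifford+$T$ case -- for instance by deriving one of them from the other together with the catalysis rules and the already-present scalar axioms \ScalarRule{} and \ZeroRule, or by replacing them with a single normalisation of the star scalar -- so that the displayed rule count is minimal. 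Each such step is a short diagrammatic computation in the phase-free fragment plus the new rules.

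The main obstacle I anticipate is the scalar bookkeeping. The phase-free ZH-calculus tracks global scalars exactly, using the star generator together with \ScalarRule{} and \ZeroRule, so any rewriting of Eq.~\eqref{eq:H-box-catalysis} or of the underlying Fourier/Euler identity Eq.~\eqref{eq:H-box-a-Fourier} must reproduce the implicit normalisation factors (the $\frac{1}{a}$-type prefactor hidden in that decomposition) on the nose: a dropped or spurious factor of $2$ or $\sqrt2$ would break completeness even while the underlying linear maps still agreed up to scalar. I would therefore verify the scalar prefactors of the displayed rules against the exact matrix identities before anything else. The second delicate point is the edge case in the completeness argument sketched around Eq.~\eqref{eq:diagram-catalysis-i}: the reduction to a single copy of a given generator only applies once at least one copy is present, so for the two-step tower one must confirm that a diagram containing, say, some $e^{i\pi/4}$-boxes but no $i$-box can still be brought into normal form -- here the scalar-introduction rule for $a = i$ is precisely what creates the missing copy -- and that no further interaction rule between the $i$-box and the $e^{i\pi/4}$-box is needed. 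Once the scalars and this edge case are checked, the Proposition follows from Theorem~\ref{thm:completeness-extension} by the inter-derivability of the two rule sets.
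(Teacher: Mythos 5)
Your proposal is correct and takes essentially the same approach as the paper: instantiate Theorem~\ref{thm:completeness-extension} with $a_1 = i$ and $a_2 = e^{i\pi/4}$, then reconcile the resulting generic rule set with the compact displayed one. In particular, the paper confirms your first suggested reconciliation route exactly -- the scalar-introduction rule for $i$ is omitted because it is derivable from the $e^{i\pi/4}$ scalar rule together with the catalysis rule for $e^{i\pi/4}$.
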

\begin{proof}
    These are the additional rules needed by Theorem~\ref{thm:completeness-extension} for completeness, except we don't have the scalar introduction rule for the label $i$. This rule can be derived from the $e^{i\frac\pi4}$ one, in combination with the catalysis rule for $e^{i\frac\pi4}$.
\end{proof}
Here we presented the rules in a slightly different manner to make clear the connection between catalysis for $\frac\pi2$ and the standard Euler decomposition of the Hadamard. Continuing the division into smaller dyadic rational multiples of $\pi$ of the form $e^{i\frac{\pi}{2^k}}$, we see that we can get a complete calculus for diagrams corresponding to Clifford-cyclotomic circuits, similar to how exact synthesis for these circuits was proven in~\cite{amy2023b}. Independently to our results, completeness for dyadic angles was also shown in the context of the Sum-over-paths formalism in~\cite{Vilmart2023Completeness} using a technique that is reminiscent of catalysis.

While there have been previous complete graphical calculi for the fragment corresponding to Clifford+$T$ circuits~\cite{SimonCompleteness,vilmartzxtriangle}, the result we find here has the double benefit of having easy to interpret axioms, and a generic proof. The axioms consist of the phase-free ZH ones, each of which corresponds to a simple property of the Boolean maps COPY, XOR and AND~\cite{zhcompleteness2020}, plus the catalysis rules. This extension is not specific to Clifford+$T$ and works for any ring extension of the form stated in Theorem~\ref{thm:completeness-extension}. Note that~\cite{zhcompleteness2020} also describes how the ZH-calculus can be made complete for arbitrary rings, but these require adding three families of rules that are each parametrised over all the elements in the ring, and hence gives a much more complex rule set.

% \begin{remark}
%     Two copies of an $e^{i\frac\pi4}$ H-box can be used to represent an H-box with an $i$ label. So instead of adding the $i$ generator, we could only add the $e^{i\frac\pi4}$ generator. We then have to modify the catalysis rules to work with these paired up combinations of labelled H-boxes.
% \end{remark}

\section{Conclusion}\label{sec:conclusion}

We applied the technique of quantum state catalysis to prove generic results in universality and completeness. In particular, we obtained new proofs of the universality of the CS+Hadamard and Toffoli+Hadamard gate sets, as well as a new simple graphical calculus for the Clifford+$T$ fragment. Our results simplify the original proofs of these statements considerably, and pave the way for further applications of catalysis in these areas.
Notably, our completeness result did not use any special property of the ZH-calculus. The fact that it was universal for a large enough fragment to support catalysis was enough to find completeness of an extension. This technique could hence also be applied to study calculi over qudits, where for certain interesting fragments of quantum computing (like qudit Clifford+$T$), there is still no complete calculus. In a related direction, one could consider using catalytic methods to study mixed-dimensional calculi, since catalytic constructions such as the ones in \cite{amy2023a} can often be made to take advantage of mixed-dimensional ancillae.

% \subsection*{Acknowledgements}

\bibliographystyle{eptcs}
\bibliography{main}

\appendix

\section{Proofs}

\subsection{Universality of real-valued unitaries}\label{app:real-valued}

\begin{lemma}
    $\tilde{U}$ is indeed unitary for any choice of $U$.
\end{lemma}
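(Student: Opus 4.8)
The plan is to rewrite $\tilde U$ as a block matrix with respect to the first qubit and reduce its unitarity to the (already known) unitarity of $U$. Writing $A := \Re(U)$ and $B := \Im(U)$, which are real matrices with $U = A + iB$, the two defining equations say precisely that, in the decomposition $\qb{n+1} = \ket{0}\otimes\qb{n} \oplus \ket{1}\otimes\qb{n}$, the map $\tilde U$ has the block form
\[
\tilde U \;=\; \begin{pmatrix} A & -B \\ B & A \end{pmatrix}.
\]
Since every entry of $\tilde U$ is real we have $\tilde U^\dagger = \tilde U^{T}$, so it suffices to verify $\tilde U^{T}\tilde U = I$.

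Next I would unpack what unitarity of $U$ gives in terms of $A$ and $B$. From $U^\dagger = A^{T} - iB^{T}$ we get
\[
I \;=\; U^\dagger U \;=\; (A^{T}A + B^{T}B) \;+\; i\,(A^{T}B - B^{T}A),
\]
and comparing real and imaginary parts yields the two identities $A^{T}A + B^{T}B = I$ and $A^{T}B = B^{T}A$. (Note $A$ and $B$ are real but not symmetric in general, so one must keep the transposes.)

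Then I would simply multiply out the blocks:
\[
\tilde U^{T}\tilde U \;=\; \begin{pmatrix} A^{T} & B^{T} \\ -B^{T} & A^{T} \end{pmatrix}\begin{pmatrix} A & -B \\ B & A \end{pmatrix} \;=\; \begin{pmatrix} A^{T}A + B^{T}B & -A^{T}B + B^{T}A \\ -B^{T}A + A^{T}B & B^{T}B + A^{T}A \end{pmatrix},
\]
and the two derived identities collapse the right-hand side to $I$. Since $\tilde U$ is a square matrix on a finite-dimensional space, $\tilde U^{T}\tilde U = I$ already forces $\tilde U$ to be invertible with $\tilde U^{-1} = \tilde U^{T} = \tilde U^\dagger$, hence unitary.

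There is no genuine obstacle here; the only care needed is bookkeeping with transposes versus conjugate-transposes and reading the block form off the basis-state action with the correct signs. The same block-matrix computation also immediately yields the companion fact $\widetilde{UV} = \tilde U\tilde V$, since $A + iB \mapsto \begin{pmatrix} A & -B \\ B & A\end{pmatrix}$ is the standard (unital, multiplicative) embedding of complex matrices into real matrices of twice the size.
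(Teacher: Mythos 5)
Your proof is correct. The block form you read off from the defining equations is right: in the decomposition by the first qubit, $\tilde U = \begin{pmatrix} A & -B \\ B & A \end{pmatrix}$ with $A=\Re(U)$, $B=\Im(U)$, and since the matrix is real, $\tilde U^\dagger = \tilde U^T$. The identities $A^T A + B^T B = I$ and $A^T B = B^T A$ are exactly what $U^\dagger U = I$ gives upon separating real and imaginary parts, and the block multiplication then collapses to the identity. Your closing remark that a one-sided inverse suffices in finite dimensions is also fine.

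Where you differ from the paper: the paper's proof avoids writing $\tilde U$ as an explicit block matrix and instead works with inner products, computing $(\bra{x}\otimes\bra{\psi})\tilde U^\dagger \tilde U(\ket{y}\otimes\ket{\psi'})$ for $x,y\in\{0,1\}$ directly using the basis-state action, together with preliminary claims about how $\Re$ and $\Im$ interact with adjoints and products. That argument is the same algebra as yours, but bundled into the statement that $\tilde U$ carries an orthonormal basis to an orthonormal basis. Your block-matrix route is more structural and has a concrete payoff: it exhibits $U \mapsto \tilde U$ as the standard unital $\mathbb R$-algebra embedding $M_{2^n}(\mathbb C) \hookrightarrow M_{2^{n+1}}(\mathbb R)$ given by $A+iB \mapsto \begin{pmatrix} A & -B \\ B & A\end{pmatrix}$, so the companion lemma $\widetilde{UV}=\tilde U\tilde V$ (which the paper proves separately by a case split on the first qubit) comes for free from multiplicativity of the embedding. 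The paper's inner-product phrasing is closer in spirit to the later argument about probabilities, but your version is cleaner and derives both lemmas at once.
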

\begin{proof}
    We will first show the following claims:
    \begin{enumerate}[a)]
        \item We can express $\Re(U^\dagger)$ and $\Im(U^\dagger)$ in terms of $\Re(U)$ and $\Im(U)$.
        \item We can Express $\Re(UV)$ and $\Im(UV)$ in terms of $\Re(U)$, $\Re(V)$, $\Im(U)$ and $\Im(V)$.
        \item We have $((\bra{0}\otimes \bra{\psi})\tilde{U}^\dagger) (\tilde{U}(\ket{0}\otimes \ket{\psi'})) = \braket{\psi}{\psi'}$. 
        \item We have $((\bra{0}\otimes \bra{\psi})\tilde{U}^\dagger) (\tilde{U}(\ket{1}\otimes \ket{\psi'})) = 0$. 
    \end{enumerate}
    \begin{enumerate}[a)]
        \item $\Re(U^\dagger) = \Re(U)^\dagger$. $\Im(U^\dagger) = - \Im(U)^\dagger$.
        \item $\Re(UV) = \Re(U)\Re(V) - \Im(U)\Im(V)$. $\Im(UV) = \Re(U)\Im(V) + \Im(U)\Re(V)$.
        \item First note that $(\bra{0}\otimes \bra{\psi})\tilde{U}^\dagger = (\tilde{U}(\ket{0}\otimes \ket{\psi}))^\dagger = \bra{0}\otimes (\bra(\psi) \Re(U^\dagger)) - \bra{1} \otimes (\bra{\psi}\Im(U^\dagger))$. Hence, using $\braket{0}{1} = 0$ the inner product reduces to $\bra{\psi} \Re(U^\dagger) \Re(U) \ket{\psi'} - \bra{\psi} \Im(U^\dagger) \Im(U) \ket{\psi'} = \bra{\psi}(\Re(U^\dagger) \Re(U) - \Im(U^\dagger) \Im(U)) \ket{\psi'} = \bra{\psi} \Re(U^\dagger U) \ket{\psi'} = \bra{\psi} \Re(I) \ket{\psi'} = \braket{\psi}{\psi'}$.
        \item Similar to the above.
    \end{enumerate}
    Let $\ket{\psi_k}$ form an orthogonal basis of $n$-qubit state space. The last two points above show that $\tilde{U}$ preserves the orthogonality of $\{\ket{\psi_k}\otimes \ket{0}, \ket{\psi_k}\otimes \ket{1}\}$. Hence, since it sends an orthogonal basis to an orthogonal basis, it is unitary.
\end{proof}

The encoding into $\tilde{U}$ is compositional, meaning we can apply it iteratively to a sequence of unitaries.
\begin{lemma}
    $\widetilde{UV} = \tilde{U}\tilde{V}$.
\end{lemma}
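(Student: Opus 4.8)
The plan is to verify the identity by evaluating both sides on a spanning set of $\C^2 \otimes \qbn$, namely the vectors $\ket{0}\otimes\ket{\psi}$ and $\ket{1}\otimes\ket{\psi}$ for $\ket{\psi}$ ranging over $\qbn$. Since $\tilde{U}$, $\tilde{V}$ and $\widetilde{UV}$ are all defined by their (linearly extended) action on exactly these vectors, and every element of the $(n+1)$-qubit space is a linear combination of them, it suffices to check that $\tilde{U}\tilde{V}$ and $\widetilde{UV}$ agree on this set.

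First I would compute $\tilde{U}\tilde{V}(\ket{0}\otimes\ket{\psi})$: applying $\tilde{V}$ gives $\ket{0}\otimes(\Re(V)\ket{\psi}) + \ket{1}\otimes(\Im(V)\ket{\psi})$, and then applying $\tilde{U}$ to each summand via the two defining equations and collecting the $\ket{0}$- and $\ket{1}$-components yields
\[
\ket{0}\otimes\big((\Re(U)\Re(V) - \Im(U)\Im(V))\ket{\psi}\big) + \ket{1}\otimes\big((\Im(U)\Re(V) + \Re(U)\Im(V))\ket{\psi}\big).
\]
Then I would invoke claim b) from the preceding lemma, $\Re(UV) = \Re(U)\Re(V) - \Im(U)\Im(V)$ and $\Im(UV) = \Re(U)\Im(V) + \Im(U)\Re(V)$, to recognise this expression as $\ket{0}\otimes(\Re(UV)\ket{\psi}) + \ket{1}\otimes(\Im(UV)\ket{\psi})$, which is precisely $\widetilde{UV}(\ket{0}\otimes\ket{\psi})$ by definition.

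The case $\ket{1}\otimes\ket{\psi}$ is identical in structure: apply $\tilde{V}$ to obtain $-\ket{0}\otimes(\Im(V)\ket{\psi}) + \ket{1}\otimes(\Re(V)\ket{\psi})$, apply $\tilde{U}$ termwise, collect components, and match the result against $\widetilde{UV}(\ket{1}\otimes\ket{\psi}) = -\ket{0}\otimes(\Im(UV)\ket{\psi}) + \ket{1}\otimes(\Re(UV)\ket{\psi})$ using the same two identities. There is no substantive obstacle here — all the real content lives in claim b), which is already established — and the only point needing care is the sign bookkeeping in the $\ket{1}$-branch of the definition of $\tilde{U}$; so the argument reduces to a short, mechanical check once claim b) is in hand.
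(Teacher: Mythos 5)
Your proposal is correct and follows the same route the paper indicates (a case distinction on $\ket{0}\otimes\ket{\psi}$ and $\ket{1}\otimes\ket{\psi}$, using the identities for $\Re(UV)$ and $\Im(UV)$ from the preceding lemma); you have simply filled in the computational details that the paper leaves to the reader.
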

\begin{proof}
    Proven easily by making a case distinction on input states $\ket{\psi}\otimes \ket{0}$ and $\ket{\psi}\otimes \ket{1}$.
\end{proof}

\subsection{Generic universality through catalysis}\label{app:proof-catalysis}

\begin{proof}[Proof of Theorem~\ref{thm:catalysis-universality}]
    Let $C$ be some circuit over $U$. Then $C':= \phi(C)$ is a circuit over $\mathcal{V}$ such that $C'\ket{\psi}\otimes \ket{c} = (C\ket{\psi})\otimes \ket{c}$. By assumption we have $\ket{\psi_j} = C_j\ket{0\cdots 0}$ for some circuit $C_j$ over $\mathcal{V}$. Hence, for some observable $\mathcal{O}$ we have:
    $$\tikzfig{catalysis-universality-general} $$
    % $$
    %     \langle\mathcal{O}\rangle \ := \ \bra{\psi}C^\dagger \mathcal{O} C\ket{\psi} = (\bra{\psi}\otimes \bra{c}) (C')^\dagger\mathcal{O} C'(\ket{\psi}\otimes \ket{c}) = \sum_i \lambda_i (\bra{\psi}\otimes \bra{0\cdots 0}) (C'\circ (I\otimes C_i))^\dagger\mathcal{O} (C'\circ (I\otimes C_i))(\ket{\psi}\otimes \ket{0\cdots 0}).
    % $$
    Hence, setting $\ket{\psi'_j} := \ket{\psi}\otimes \ket{0\cdots 0}$, $C'_j := C'\circ (I\otimes C_j)$ and $\mathcal{O}_j := \mathcal{O}\otimes I$, we see that we can simulate the computation of an expectation value over a circuit in $\mathcal{U}$ using a set of circuits in $\mathcal{V}$. Any computation done in $\mathcal{U}$ can then also be done using circuits in $\mathcal{V}$. Hence, $\mathcal{V}$ is computationally universal.
\end{proof}

\subsection{Completeness of extensions of the ZH-calculus}\label{app:completeness}

\begin{proof}[Proof of Theorem~\ref{thm:completeness-extension}]
    First, note that the catalysis equation is well-typed: By assumption we have $a_i^2\in \Z[\frac12, a_1,\ldots, a_{i-1}]$. We know that if we have the generators \tikzfig{H-box-a-j} for $j\leq i-1$, then the calculus can represent any matrix with entries in $\Z[\frac12, a_1,\ldots, a_{i-1}]$. So there is some way to represent the matrix of the H-box with a label of $a_i^2$. Assuming the calculus over this ring is complete, any such way to represent this matrix is equivalent, and hence would lead to an equivalent catalysis rule.
    
    We will prove by induction on $k$ with base case $k=1$. We may assume $a_1\not\in \Z[\frac12]$, since otherwise the statement is trivial by completeness of the phase-free calculus. In that case any number in $\Z[\frac12, a_1]$ can be uniquely written as $z_1+a_1z_2$ for $z_1,z_2\in\Z[\frac12]$. We can now do all the steps we described before for $a=i$, translating a diagram $D$ containing an arbitrary number of the H-box with label $a_1$ into a diagram $D'$ in the $\Z[\frac12]$ fragment which just requires a single input of the $a_1$ H-box: $D = D'[\ket{0}+a_1\ket{1}] = D'[\ket{0}] + a_1 D'[\ket{1}]$. If we then have an equality between two diagrams $D_1$ and $D_2$ in the $\Z[\frac12,a_1]$ fragment, we get $D'_1[\ket{0}] + a_1 D'_1[\ket{1}] = D'_2[\ket{0}] + a_1 D'_2[\ket{1}]$. Because each of the component diagrams only contains elements from $\Z[\frac12]$, and a decomposition $z_1+a_1z_2$ is unique, this equation can only hold if each of the two separate components are equal. We hence again get two equalities $D'_1[\ket{0}] = D'_2[\ket{0}]$ and $D'_1[\ket{1}] = D'_2[\ket{1}]$, which allows us to conclude that $D'_1 = D'_2$ with the wire left open. As $D'_1$ and $D'_2$ are diagrams in the smaller fragment for which we have completeness, we can rewrite one into the other. Plugging in the H-box with label $a_1$ then gives us a diagrammatic proof of equality.

    The induction step follows similarly: we just observe that if $a_i\not\in \Z[\frac12, a_1,\ldots, a_{i-1}]$ that there is then again a unique way to write a number in the ring $\Z[\frac12, a_1,\ldots, a_{i-1}, a_i]$ as $z_1+a_iz_2$ where $z_1,z_2\in \Z[\frac12, a_1,\ldots, a_{i-1}]$. We can then again use the catalysis and scalar introduction rule to rewrite the diagram into the form where we can use the completeness of the smaller fragment.
\end{proof}

\section{Small angle rotations, adders and catalysis}\label{app:adder}

Catalysis is not just interesting from a theoretical viewpoint, allowing you to prove the universality and completeness of certain gate sets or generators, it is also a \emph{practically} useful tool. In this section we will see how catalysis can be used to derive an efficient way to implement small angle rotations.

To do that we first need to generalise Eq.~\eqref{eq:cat-CS-T} to allow us to implement \emph{controlled}-phase gates. To see how this works it will be helpful to first write Eq.~\eqref{eq:cat-CS-T} in circuit notation:
\begin{equation}\label{eq:cat-CS-T-circ}
    \tikzfig{cat-CS-T-circ}
\end{equation}
Here we wrote a slightly more general circuit where we replace the $T$ and controlled-$S$ gates with $Z(\alpha)$ and controlled-$Z(2\alpha)$ gates. As a shorthand we write $\ket{Z(\alpha)} := Z(\alpha)\ket{+}$ as a generalisation of $\ket{T} = T\ket{+}$. 
Since this is a circuit equality that holds on the nose (with a correct global phase), it should continue to hold when we add additional control wires:
\begin{equation}\label{eq:cat-controlled-phase}
    \tikzfig{cat-controlled-phase}
\end{equation}
We can prove this is correct using ZH:
\begin{equation*}
    \tikzfig{cat-controlled-phase-pf}
\end{equation*}
Because we can apply catalysis equally well to controlled phases, we can start iterating the procedure producing bigger and bigger controlled-phase gates, where the phase being controlled is also increasingly large. For instance, if we want to implement a $T$ gate, we can do the following:
\begin{equation}\label{eq:cat-T-repeated-circ}
    \tikzfig{cat-T-repeated-circ}
\end{equation}
Here in the last step we are left with a controlled $Z^2$ operation. But since $Z^2=\text{id}$ this does not do anything and we can remove it. So at this point we can stop the iteration of the catalysis. We see then that we can implement a $T$ gate just using multiple-controlled Toffoli gates, if we have the right catalysis states lying around. This procedure works to implement any $Z(2\pi/2^k)$ gate: we then get a ladder of $k$ Toffoli gates.
This implements a controlled-decrementer circuit that decreases the value of an $n$-bit number by $1$, controlled on the top wire. By making a ladder of these controlled-decrementers we implement a subtraction circuit that maps $\ket{a,b}\mapsto \ket{a,b-a}$ for some $n$-bit numbers $a$ and $b$. For this reason, when we apply a subtraction circuit to a collection of catalysis states, this implements phase gates on on the top qubits:
\begin{equation}\label{eq:cat-sub-circ}
    \tikzfig{cat-sub-circ}
\end{equation}
An adder can be implemented quite efficiently, so we transform Eq.~\eqref{eq:cat-sub-circ} slightly, so that it uses an adder instead of a subtracter. By taking Eq.~\eqref{eq:cat-sub-circ} and composing both sides on the right by $\text{Add} = \text{Sub}^\dagger$, and on the left by $(T\otimes S\otimes Z)^\dagger$. After cancelling with the adjoints we are then left with the following equation:
\begin{equation}\label{eq:cat-addition-circ}
    \tikzfig{cat-addition-circ}
\end{equation}
We showed the construction here for 3 bits, but this works for any number of bits $n$, in which case the smallest phase we implement is $Z(2\pi/2^n)$.

Such a series of parallel phases is not that useful, but by using ancillae we can make them work on the same qubit.
First, we can transfer the application of a phase gate to a zeroed ancilla:
\begin{equation}
    \tikzfig{phase-gate-on-ancilla}
\end{equation}
Now when we have a complicated phase, we can decompose it into its components, and put each of these onto its own ancilla. Suppose for instance we want to implement the phase $Z(\frac{11}{8}\pi)$. We can then write $11$ bitwise as $1011$ so that $Z(\frac{11}{8}\pi) = Z(2\pi/2^4 (2^3 + 2^1 + 2^0))$. We can then put each of these component phases onto their own ancilla to get:
\begin{equation}\label{eq:phase-gate-on-ancillae}
    \tikzfig{phase-gate-on-ancillae}
\end{equation}
We have here also added a zeroed ancilla that gets a $Z(\frac\pi2)$ applied that does nothing. We need this qubit to complete the pattern: we see then that we get the right shape needed to use Eq.~\eqref{eq:cat-addition-circ}. However, note that Eq.~\eqref{eq:cat-addition-circ} has adjoint phases, instead of the actual phases we need. There are multiple ways we can deal with this. One way is to realise that for phase gates, the adjoint is the conjugate: $T^\dagger = \overline{T}$. Hence, if we take the conjugate of both sides of Eq.~\eqref{eq:cat-addition-circ} we do get the right phases. Since the Adder is a real matrix, this stays the same, but the states needed for the catalysis also flip: $\overline{\ket{T}} = \ket{T^\dagger}$. We then have everything we need to produce the circuit we are after:
\begin{equation}\label{eq:phase-gate-via-catalysis}
    \tikzfig{phase-gate-via-catalysis}
\end{equation}
This is the construction that is presented in~\cite{Gidney2018halvingcostof}. There it was proven correct by arguing about the interaction between the quantum Fourier transform and addition. In comparison, the construction we present here is more bottom-up and only uses elementary facts about quantum circuits and catalysis.

\end{document}